\theoremstyle{definition}
\newtheorem{theorem}{Theorem}[section]
\newtheorem{prop}[theorem]{Proposition}
\newtheorem{definition}[theorem]{Definition}
\newtheorem{example}[theorem]{Example}
\theoremstyle{remark}
\newtheorem{rem}[theorem]{Remark}
\newtheorem*{acknowledgment}{Acknowledgments}
\DeclareMathOperator{\Q}{\mathbb{Q}}
\DeclareMathOperator{\N}{\mathbb{N}}
\DeclareMathOperator{\Z}{\mathbb{Z}}
\DeclareMathOperator{\EE}{\mathbb{E}} 
\DeclareMathOperator{\A}{\mathbb{A}}
\DeclareMathOperator{\Ah}{\widehat{\mathbb{A}}}
\DeclareMathOperator{\sgn}{sgn}
\newcommand{\seq}[1]{{\boldsymbol{#1}}}
\newcommand{\vev}[1]{{\langle#1\rangle}}
\newcommand{\Bvev}[1]{{\Bigl\langle#1\Bigr\rangle}}
\newcommand{\qb}[2]{\genfrac{[}{]}{0pt}{}{#1}{#2}}
\title{Quiver mutation sequences and $q$-binomial identities}
\author{Akishi~Kato}%
\address{Graduate School of Mathematical Sciences,
The University of Tokyo,
3-8-1 Komaba, Meguro-ku, Tokyo 153-8914, Japan.}
\email{akishi@ms.u-tokyo.ac.jp}
\author{Yuma~Mizuno}%
\address{Department of Mathematical and Computing Science,
Tokyo Institute of Technology,
2-12-1 Ookayama, Meguro-ku, Tokyo 152-8550, Japan.}
\email{mizuno.y.aj@m.titech.ac.jp}
\author{Yuji~Terashima}%
\address{Department of Mathematical and Computing Science,
Tokyo Institute of Technology,
2-12-1 Ookayama, Meguro-ku, Tokyo 152-8550, Japan.}
\email{tera@is.titech.ac.jp}
\numberwithin{equation}{section}
\begin{document}

\begin{abstract}
In this paper, first we introduce a quantity called a partition function
for a quiver mutation sequence. The partition function is a generating
function whose weight is a $q$-binomial associated with each mutation. Then,
we show that the partition function can be expressed as a ratio of products
of quantum dilogarithms. This provides a systematic way of
constructing various $q$-binomial multisum identities.
\end{abstract}

\maketitle

\tableofcontents

\section{Introduction}
Quiver mutations appear in various areas of mathematics and physics such
as Donaldson-Thomas theory, low-dimensional topology, representation
theory, and gauge theory. It is very important to capture quantitatively
a common structure hidden in various guises of quiver mutations.

In our previous works \cite{KT1,KT2}, we introduced the \emph{partition
$q$-series} for a quiver mutation loop --- a mutation sequence whose
final quiver is isomorphic to the initial one. The partition $q$-series
is defined as a sum over state variables defined on a graph which describes the
sequential evolution of the quiver, and depend only on the combinatorial
structures of quiver mutation sequences. For certain class of quiver
mutation sequences called reddening sequences, we proved that the graded
version of partition $q$-series essentially coincides with the
combinatorial Donaldson-Thomas invariants introduced by Keller \cite{Keller2011, Keller2012, Keller2013a} motivated
by Nagao \cite{Nagao2011, Nagao2013} and Reineke \cite{Reineke2010, Reineke2011}. This is a combinatorial version of more general
invariants studied in Kontsevich and Soibelman's groundbreaking work \cite{KS2008, KS2010} on
the BPS state counting problem. It goes without saying that cluster
algebras play crucial role in the development such as
the periodicity of $T$- and $Y$-systems and the associated dilogarithm
identities \cite{Fomin2003, IIKKN2013, Keller2013b, Nakanishi2011a, Nakanishi2011b}.

In this paper, we first introduce a quantity called a \emph{partition
function} for a quiver mutation sequence. Just like the partition
$q$-series, the partition functions are defined as a sum over states on
a graph generated by the sequential evolution of the quiver. It has a
$\Z^{n}$-grading which comes from $c$-vectors, as is the case with the
partition $q$-series. The partition function is a generating
function whose coefficients are a product of local weights associated
with mutations.  As compared with the partition $q$-series, however,
there are some important differences. (1) The partition functions are
defined for \emph{mutation sequences} rather than mutation loops. (2)
The partition functions depend explicitly on \emph{initial conditions
on state variables}. (3) The partition functions uses \emph{$q$-binomial
coefficients} as the local weights.

We then show that the partition function is an ``invariant of quiver
mutation sequences'' in the following sense: if two mutation sequences
$\seq{m}$ and $\seq{m}'$ on a quiver $Q$ result in same quivers and
$c$-vectors, then the two partition functions associated to them
coincide, that is, $Z_{\seq{m}} = Z_{\seq{m}'}$.  This is considered to
be a $q$-binomial version of so-called quantum dilogarithm identities.
In fact, we show that the partition
function can be expressed as a ratio of products of quantum
dilogarithms.  Then, we can use the usual quantum dilogarithm
identities to show the invariance of the partition function. 
These results provide a systematic way of constructing various $q$-binomial
multisum identities.

This paper is organized as follows. In Section \ref{sec:museq} we review
some definitions and terminologies about quiver mutations.  In Section
\ref{sec:part func} we define the partition functions of mutation
sequences.  We then give our main results in Section \ref{sec:main}.  We
give a formula that express the partition function as a ratio of
products of quantum dilogarithms, then we prove the identities of the
partition functions.  In Section \ref{sec:examples}, we give some
examples of the partition functions of mutation sequences, which
describe our main results in concrete situations.

\begin{acknowledgment}
 We would like to thank R.~Inoue, R.~Kobayashi, A.~Kuniba, and M.~Yamazaki
 for valuable discussions. This work is partially supported by
 JSPS KAKENHI Grant Number JP16K13752, JP16H03931 and 25400083, and by CREST, JST.
\end{acknowledgment}

\section{Quiver mutation sequences}
\label{sec:museq}

\subsection{Quivers and  mutation sequences}
\label{sec:notation}

We briefly review the notion of quiver mutations introduced by Fomin and Zelevinsky
\cite{Fomin2002}.

A \emph{quiver} $Q$ is an oriented graph.
In this paper we assume that all
quivers are finite oriented graphs without loops
or 2-cycles:
\begin{equation*}
 \text{loop} \quad \vcenter{
  \xymatrix @R=8mm @C=8mm
  @M=2pt{\bullet \ar@(ur,dr)[]}}  \qquad \qquad
  \text{2-cycle}\quad \vcenter{ \xymatrix @R=8mm @C=8mm @M=2pt{\bullet
  \ar@/^/[r] & \bullet \ar@/^/[l]} }.
\end{equation*}
Throughout the paper, we identify the set of vertices of $Q$ with
$\{1,2,\dots,n\}$.  Let $Q_{ij} \in \N=\{0,1,2,\cdots\}$ be a
multiplicity of the arrows,
and consider it as $(i,j)$ entry of an
$n\times n$ matrix. Then the quiver can be identified with a skew-symmetric
integer $n \times n$ matrix $B$ given by
\begin{align}
  B_{ij} = Q_{ij} - Q_{ji}.
\end{align}

Let $k$ be a vertex in a quiver $Q$.  The \emph{mutation} $\mu_k$ is a
transformation of the quiver $Q$ to another quiver $\mu_k(Q)$ as defined
below.
The mutated quiver $\mu_k(Q)$ has the same vertices as $Q$.  Its
arrows are obtained from those of $Q$ via three steps:
\begin{itemize}
 \item[1)] for each length two path $i\to k\to j$, add a new arrow
   $i\to j$;
 \item[2)] reverse the direction of all arrows with source or target $k$;
 \item[3)] remove all $2$-cycles which arose in step 1.
\end{itemize}
Mutations are involutive,
i.e. $\mu_{k}(\mu_{k}(Q))=Q$ for any $1\leq k\leq n$.  The mutations can
be seen as a transformation of skew-symmetric matrices.  Let $B$ be the
skew-symmetric corresponding to the quiver $Q$.  Then the skew-symmetric
matrix $\mu_k(B)$ corresponding to $\mu_k(Q)$ is given by
\begin{equation}
 \label{eq:matrix-mutation} \mu_k(B)_{ij} =
   \begin{cases}
    -B_{ij} &
    \text{if $i=k$ or $j=k$} \\
    B_{ij}+\sgn(B_{ik}) \max(B_{ik}B_{kj},0)
    & \text{otherwise.}
   \end{cases}
\end{equation}

A \emph{mutation sequence} of a quiver $Q$ is a finite sequence of 
mutations starting from $Q$.
We denote it by $\seq{m}=(m_{1},m_{2},\dots,m_{T})$,
a finite sequence of vertices of $Q$.
The mutation sequence $\seq{m}$ induces a (discrete) time evolution of the quivers:
\begin{equation}
\label{eq:Q-seq}
\xymatrix@=20pt{
  Q(0) \ar[r]^-{\mu_{m_{1}}} & Q(1) \ar[r]^-{\mu_{m_{2}}} & \cdots \ar[r]
  & Q(t-1) \ar[r]^-{\mu_{m_{t}}} & Q(t) \ar[r] &\cdots \ar[r]^-{\mu_{m_{T}}} & Q(T)
},
\end{equation}
where $Q(t) := \mu_{m_{t}}(Q(t-1))$.
The quiver $Q(0)$ is called the \emph{initial} quiver
and $Q(T)$ the \emph{final} quiver. We will use the notation
$\mu_{\seq{m}}(Q)=\mu_{m_T}(\cdots \mu_{m_2}(\mu_{m_1}(Q))\cdots)$.

\subsection{Ice quivers and $c$-vectors}
\label{sec:c-vectors}

We will follow the terminology in \cite{Bruestle2013}. An \emph{ice
quiver} is a pair $(\widetilde{Q},F)$ where $\widetilde{Q}$ is a quiver
and $F$ is a (possibly empty) subset of
vertices of $\widetilde{Q}$ such that there are no arrows
between them.
Vertices in $F$ are called \emph{frozen vertices}.
Two ice quivers $(\widetilde{Q},F)$ and
$(\widetilde{Q}',F')$ are \emph{frozen isomorphic} if $F=F'$ and
there is an isomorphism of quivers $\phi : \widetilde{Q}\to
\widetilde{Q}'$ such that $\phi|F$ is an identity map.

For any quiver $Q$, we can canonically construct an ice quiver $Q^{\wedge}$
called a $\emph{framed quiver}$. This is obtained from $Q$ by adding, for each vertex $i$, a new
frozen vertex $i'$ and a new arrow $i\to i'$.

Let $\seq{m}=(m_{1},m_{2},\dots,m_{T})$ be a mutation sequence of
$Q$. By putting
\begin{equation}
 \widetilde{Q}(0)=Q^{\wedge}, \qquad
  \widetilde{Q}(t)=\mu_{m_{t}}(\widetilde{Q}(t-1))\qquad (t=1,2,\dots,T)
\end{equation}
we can construct a sequence of ice quivers
\begin{equation}
\label{eq:Q-tilde-seq}
\xymatrix@=20pt{
  \widetilde{Q}(0) \ar[r]^-{\mu_{m_{1}}} & \widetilde{Q}(1) \ar[r]^-{\mu_{m_{2}}} & \cdots \ar[r]
  & \widetilde{Q}(t-1) \ar[r]^-{\mu_{m_{t}}} & \widetilde{Q}(t) \ar[r] &\cdots 
  \ar[r]^-{\mu_{m_{T}}} & \widetilde{Q}(T)
}.
\end{equation}
Note that we never mutate at frozen vertices $F=\{1',\dots,n'\}$.
Let $\widetilde{B}(t)$ denote the skew-symmetric
matrix corresponding to $\widetilde{Q}(t)$.

The $c$-vectors, which were introduced by Fomin and Zelevinsky \cite{Fomin2007},
are defined by counting arrows to/from frozen vertices.
\begin{definition}
  \label{def:c-vector} A \emph{$c$-vector} of a vertex $v$ in $Q(t)$ is a
  vector in $\Z^{n}$ given by
  \begin{equation}
    c_{v}(t):=\bigl(\widetilde{B}(t)_{vi'}\bigr)_{i=1}^{n}.
  \end{equation}
\end{definition}
If the vertices of $\widetilde{Q}(t)$ are ordered as $(1,\dots,n,
1',\dots,n')$, the skew-symmetric matrix $\widetilde{B}(t)$ has the
block form
\begin{equation}
 \label{eq:B-tilde-block}
 \def\h{\rule[-2.2ex]{0pt}{5.8ex}}
  \widetilde{B}(t) =
  \begin{array}{|c|c|}
   \hline
    \h B(t) & ~C(t)~ \\
   \hline
    \h {-}C(t)^{\top} & 0 \\ \hline
  \end{array}\,,\qquad  C(t)=\begin{array}{|@{\qquad}c@{\qquad}|}
  \hline c_{1}(t) \\
  \hline c_{2}(t) \\
  \hline \cdots  \\
  \hline c_{n}(t) \\
  \hline
      \end{array}\;,
\end{equation}
where $X^{\top}$ denotes the transpose of $X$.  The $n\times n$ block
$C(t)$ is called \emph{$c$-matrix}, which consists of row of
$c$-vectors.  By construction, $c_{i}(0)=e_{i}$, where $e_{i}$ is the
$i$-th standard unit vector in $\Z^{n}$.

\subsection{Green and red mutations}
\label{sec:green sequences}

The vertex $v$ of $Q(t)$ is called \emph{green}
(resp. \emph{red}) if $c_{v}(t)\in \N^{n}$ (resp. $-c_{v}(t)\in
\N^{n}$). Note that every vertex of the initial quiver $Q(0)$ is green
by construction.  The following crucial property is satisfied by
$c$-vectors:
\begin{theorem}[Sign coherence \cite{Derksen2010,Plamondon2011,Nagao2013}]
Any vertex $v$ in $Q(t)$ is either green or red.
\end{theorem}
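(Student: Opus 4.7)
The plan is to prove sign coherence by categorifying the $c$-vectors via decorated representations of a non-degenerate quiver with potential, following Derksen, Weyman and Zelevinsky \cite{Derksen2010}. Fix a non-degenerate potential $W$ on $Q(0)$; such $W$ exists over $\C$. Starting from the trivial decorated representation $\mathcal{M}(0):=(0,\C^{n})$ of the Jacobian algebra $J(Q,W)$, I would inductively set $\mathcal{M}(t):=\mu_{m_{t}}\mathcal{M}(t-1)$ using the DWZ mutation rule for decorated representations, thereby obtaining at each step a decorated representation of the mutated quiver with potential $(Q(t),W(t))$.

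The next step is to relate this categorical data to the $c$-matrix. One shows by induction on $t$ that $\mathcal{M}(t)$ admits a direct sum decomposition $\mathcal{M}(t)=\bigoplus_{v=1}^{n}\mathcal{N}_{v}(t)$, in which each summand $\mathcal{N}_{v}(t)$ is either an ordinary indecomposable $J(Q(t),W(t))$-module (a \emph{positive} summand) or a negative simple concentrated in the decoration space (a \emph{negative} summand). Reading off the signed dimension vectors of these summands assembles the $g$-matrix $G(t)$; the tropical duality identity $C(t)=\bigl(G(t)^{-1}\bigr)^{\top}$, verified inductively from the mutation rule \eqref{eq:matrix-mutation}, then shows that the sign of each row $c_{v}(t)$ is controlled by the type of $\mathcal{N}_{v}(t)$. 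Since every summand has a single type, all entries of $c_{v}(t)$ share the same sign, which is the desired dichotomy.

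The main obstacle is the compatibility between DWZ mutation of decorated representations and the tropical mutation of $c$- and $g$-vectors: at vertex $m_{t}$ one must analyze the short exact sequences built from the kernel and cokernel of the composition of arrow maps in and out of $m_{t}$, and verify that the resulting dimension vectors transform exactly as prescribed by the tropical formulas derived from \eqref{eq:matrix-mutation}. This also requires the independence of QP mutation from the representative of the right-equivalence class of $W$, together with the fact that non-degeneracy is preserved under mutation. Granted this machinery, sign coherence is an immediate consequence of the indecomposable decomposition of $\mathcal{M}(t)$: summands of a fixed type produce $c$-vectors of a fixed sign, so each vertex of $Q(t)$ is either green or red.
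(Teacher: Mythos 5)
The paper does not prove this theorem at all: it is imported as a black box from \cite{Derksen2010,Plamondon2011,Nagao2013}, so the only meaningful comparison is with those proofs. Your overall strategy --- categorify the mutation sequence by decorated representations of a non-degenerate quiver with potential and read sign coherence off a positive/negative decomposition --- is indeed the spirit of the cited works, but the decisive step in your sketch does not hold together as written.

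There are two concrete problems with the passage from the decomposition $\mathcal{M}(t)=\bigoplus_{v}\mathcal{N}_{v}(t)$ to the sign of $c_{v}(t)$. First, circularity: the recursion for $g$-vectors/$G$-matrices, and the tropical duality $C(t)=\bigl(G(t)^{-1}\bigr)^{\top}$ you invoke (due to Nakanishi--Zelevinsky), are established \emph{assuming} sign coherence of $c$-vectors; indeed the $G$-matrix mutation rule involves the tropical sign of $c_{m_{t}}(t-1)$, so "verifying the duality inductively from \eqref{eq:matrix-mutation}" already presupposes that this $c$-vector has a well-defined sign, which is exactly the statement to be proven. Second, a non sequitur: even granting the duality, the dichotomy "positive summand vs.\ negative summand" constrains dimension vectors (entrywise nonnegative data attached to $g$-type invariants); it does not by itself control the entrywise signs of the rows of $\bigl(G(t)^{-1}\bigr)^{\top}$, since the inverse transpose of a matrix assembled from signed dimension vectors has no evident sign pattern --- your final sentence "summands of a fixed type produce $c$-vectors of a fixed sign" is precisely the crux and is asserted rather than proved. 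The cited proofs close this gap differently: in \cite{Derksen2010} one shows every $F$-polynomial has constant term $1$ via $F_{\mathcal{M}}=\sum_{e}\chi\bigl(\mathrm{Gr}_{e}(M)\bigr)y^{e}$, which is equivalent to sign coherence by Fomin--Zelevinsky \cite{Fomin2007}, Proposition 5.6; in the categorical approaches of \cite{Nagao2013,Plamondon2011} each $c_{v}(t)$ is identified, up to an overall sign, with the dimension vector of an actual module (a simple of the mutated Jacobian algebra viewed in the derived category of the Ginzburg algebra), so coherence is immediate. Rerouting your argument through either of these identifications would repair it; as it stands, the key step is missing.
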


The mutation $ \mu_{m_{t}}:Q(t-1)\to
Q(t)$ is \emph{green} (resp. \emph{red}) if the mutating vertex $m_{t}$
is green (resp. red) on $Q(t-1)$, i.e. on the quiver before
mutation.
The \emph{sign} $\varepsilon_{t}$ of the mutation
$\mu_{m_{t}}$ is defined by
\begin{equation}
 \label{eq:eps-def}
 \varepsilon_{t} =
  \begin{cases}
   +1 & \text{if $\mu_{m_{t}}$ is green}, \\
   -1 & \text{if $\mu_{m_{t}}$ is red}. \\
  \end{cases}
\end{equation}

\begin{figure}[tb]
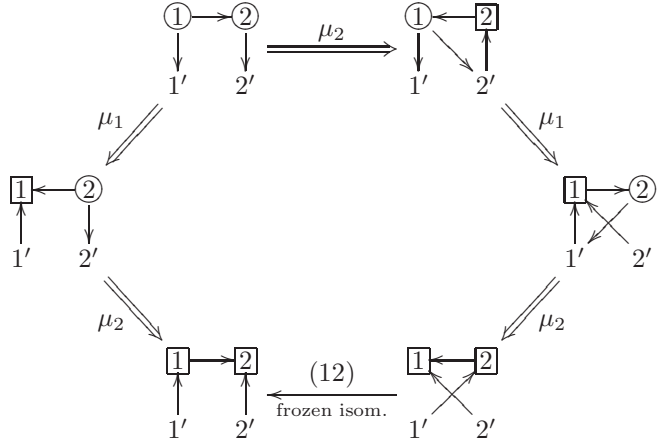

\begin{equation*}
 \xygraph{!{<0cm,0cm>;<16mm,0cm>:<0cm,23mm>::}
 !{(1,2)}*+{
  \vcenter{\xybox{\xygraph{!{<0cm,0cm>;<9mm,0cm>:<0cm,-9mm>::}
 !{(0,0)}*+<6pt>[Fo]{1}="v1"
 !{(1,0)}*+<6pt>[Fo]{2}="v2"
 !{(0,1)}*+<3pt>{1'}="w1"
 !{(1,1)}*+<3pt>{2'}="w2"
 "v1":"v2"
 "v1":"w1"
 "v2":"w2"
 }}}}="Q1"
 !{(-0.3,1)}*+{
  \vcenter{\xybox{\xygraph{!{<0cm,0cm>;<9mm,0cm>:<0cm,-9mm>::}
 !{(0,0)}*+<3pt>[F]{1}="v1"
 !{(1,0)}*+<6.0pt>[Fo]{2}="v2"
 !{(0,1)}*+<3pt>{1'}="w1"
 !{(1,1)}*+<3pt>{2'}="w2"
 "v2":"v1"
 "w1":"v1"
 "v2":"w2"
 }}}}="Q2"
 !{(1,0)}*+{
  \vcenter{\xybox{\xygraph{!{<0cm,0cm>;<9mm,0cm>:<0cm,-9mm>::}
 !{(0,0)}*+<3pt>[F]{1}="v1"
 !{(1,0)}*+<3pt>[F]{2}="v2"
 !{(0,1)}*+<3pt>{1'}="w1"
 !{(1,1)}*+<3pt>{2'}="w2"
 "v1":"v2"
 "w1":"v1"
 "w2":"v2"
 }}}}="Q3"
 !{(3,2)}*+{
  \vcenter{\xybox{\xygraph{!{<0cm,0cm>;<9mm,0cm>:<0cm,-9mm>::}
 !{(0,0)}*+<6.0pt>[Fo]{1}="v1"
 !{(1,0)}*+<3pt>[F]{2}="v2"
 !{(0,1)}*+<3pt>{1'}="w1"
 !{(1,1)}*+<3pt>{2'}="w2"
 "v2":"v1"
 "v1":"w1"
 "v1":"w2"
 "w2":"v2"
 }}}}="Q4"
 !{(4.3,1)}*+{
  \vcenter{\xybox{\xygraph{!{<0cm,0cm>;<9mm,0cm>:<0cm,-9mm>::}
 !{(0,0)}*+<3pt>[F]{1}="v1"
 !{(1,0)}*+<6.0pt>[Fo]{2}="v2"
 !{(0,1)}*+<3pt>{1'}="w1"
 !{(1,1)}*+<3pt>{2'}="w2"
 "v1":"v2"
 "v2":"w1"
 "w2":"v1"
 "w1":"v1"
 }}}}="Q5"
 !{(3,0)}*+{
  \vcenter{\xybox{\xygraph{!{<0cm,0cm>;<9mm,0cm>:<0cm,-9mm>::}
 !{(0,0)}*+<3pt>[F]{1}="v1"
 !{(1,0)}*+<3pt>[F]{2}="v2"
 !{(0,1)}*+<3pt>{1'}="w1"
 !{(1,1)}*+<3pt>{2'}="w2"
 "v2":"v1"
 "w2":"v1"
 "w1":"v2"
 }}}}="Q6"
 "Q1":@{=>}_{\textstyle \mu_{1}}"Q2"
 "Q2":@{=>}_{\textstyle \mu_{2}}"Q3"
 "Q1":@{=>}^{\textstyle \mu_{2}}"Q4"
 "Q4":@{=>}^{\textstyle \mu_{1}}"Q5"
 "Q5":@{=>}^{\textstyle \mu_{2}}"Q6"
 "Q3":@{<-}^{\textstyle (12)}_{\text{frozen isom.}}"Q6"
}
\end{equation*}
\caption{Pentagon and the $A_{2}$ quiver. The green and red vertices are
 marked with circles and boxes, respectively. Both
 $\seq{m}=(1,2)$ and $\seq{m}'=(2,1,2)$ are maximal green sequences.}
 \label{fig:pentagon}
\end{figure}

A mutation sequence $\seq{m}=(m_{1},m_{2},\dots,m_{T})$
is called a \emph{green sequence} if $m_{t}$
is green for all $t$.
A mutation sequence $\seq{m}$ is called a \emph{reddening sequence} if all
vertices of the final quiver $Q(T)$ are red.
A mutation sequence $\seq{m}$ is called a \emph{maximal green sequence} if it is green and reddening.
All maximal green sequences are reddening by definition, but there is a reddening sequence that is not maximal green (see Example \ref{example:B_2}).
In Figure \ref{fig:pentagon}, the two maximal green
sequences $(1,2)$ and $(2,1,2)$  are shown for the $A_{2}$ quiver.

\subsection{Noncommutative algebra $\Ah_{Q}$}
We introduce a noncommutative associative algebra which we use in the definition of the partition function.

Let $Q$ be a quiver with vertices $\{1,2,\dots,n\}$. We define a skew
symmetric bilinear form $\vev{~,~}:\Z^{n}\times \Z^{n}\to \Z$ by
\begin{equation}
 \label{def:vev}
  \vev{e_{i},e_{j}}=B_{ij}=-B_{ji}=Q_{ij}-Q_{ji},
\end{equation}
where $e_{1},\dots,e_{n}$ are the standard basis vectors in $\Z^{n}$.

Let $\A_{Q}$ be a noncommutative associative algebra
over $\Q(q^{1/2})$ presented as
\begin{equation}
 \label{eq:quantum-affine-space}
  \A_{Q}=\Q(q^{\frac{1}{2}})
  \langle\,
  y^{\alpha},~\alpha\in \N^{n} ~|~
  y^{\alpha}y^{\beta}
  = q^{\frac{1}{2}\vev{\alpha,\beta}}y^{\alpha+\beta}\,
  \rangle.
\end{equation}
Its completion with respect to the $\N^{n}$-grading is denoted by
$\widehat{\A}_{Q}$, which can be regarded as the ring of noncommutative power series
in $y_{i}:=y^{e_{i}}$ $(i=1,\dots,n)$.

We will often use the following relations
($\alpha=(\alpha_1,\dots,\alpha_n ) \in \Z^n $):
\begin{equation}
 \label{eq:y-monom}
\begin{split}
 & y_{1}^{\alpha_{1}}
  y_{2}^{\alpha_{2}}
  \dots
  y_{n}^{\alpha_{n}}
  =q^{\frac{1}{2}\sum_{i<j} B_{ij}\alpha_{i}\alpha_{j}}
  y^{\alpha},
 \\
 & y^{\alpha}=q^{-\frac{1}{2}\sum_{i<j} B_{ij}\alpha_{i}\alpha_{j}}
  y_{1}^{\alpha_{1}}
  y_{2}^{\alpha_{2}}
  \dots
  y_{n}^{\alpha_{n}}.
\end{split}
\end{equation}
The $q$-factors in the definition of $y^{\alpha}$ are chosen to guarantee
invariance under the reindexing of the $y_{i}$'s.

\section{Partition functions}
\label{sec:part func}
In this section, we introduce partition functions for mutation sequences.

Let $Q$ be a quiver with vertices $\{1,2,\dots,n\}$.  We consider a
mutation sequence $\seq{m}=(m_{1},m_{2},\dots,m_{T})$ of $Q$:
\begin{equation}
\label{eq:Q-seq-2}
\xymatrix@=20pt{
  Q(0) \ar[r]^-{\mu_{m_{1}}} & Q(1) \ar[r]^-{\mu_{m_{2}}} & \cdots \ar[r]
  & Q(t-1) \ar[r]^-{\mu_{m_{t}}} & Q(t) \ar[r] &\cdots \ar[r]^-{\mu_{m_{T}}} & Q(T)
}.
\end{equation}

We first give a family of \emph{$s$-variables} $\{s_{i}\}$,
\emph{$k$-variables} $\{k_{t}\}$, and
\emph{$k^{\vee}$-variables} $\{k^{\vee}_{t}\}$ by the following rule:
\begin{itemize}
 \item[(i)] An ``initial'' $s$-variable $s_{v}$ is attached to each
      vertex $v$ of the initial quiver $Q=Q(0)$.

 \item[(ii)] When we mutate a quiver at vertex $v$, we add a ``new''
      $s$-variable associated with $v$.

 \item[(iii)] We associate $k_{t}$ and $k^{\vee}_{t}$ with each mutation
      at $m_{t}$.

\end{itemize}

The $s$-, $k$-, and $k^{\vee}$-variables are not regarded as
independent. We require a linear relation for each mutation.
If the quiver $Q(t-1)$ equipped with $s$-variables $\{s_{i}\}$
is mutated at vertex $v=m_{t}$ to give $Q(t)$, then $k$- and
$s$-variables are required to satisfy
\begin{equation}
 \label{eq:k-s-rel}
 k_{t}=\begin{cases}
  \displaystyle
  s_{v}+s_{v}'-\sum_{a\to v} s_{a} & \text{if $\mu_{v}$ is
  green} ~(\varepsilon_{t}=1)\\
  \displaystyle \rule{0pt}{15pt}
  \sum_{v\to b} s_{b} - (s_{v}+s_{v}') & \text{if $\mu_{v}$ is
  red} ~(\varepsilon_{t}=-1)\\
       \end{cases}
\end{equation}
Here, $s'_{v}$ is the ``new'' $s$-variable attached to mutated vertex
$v$, and the sum is over all the arrows of $Q(t-1)$.
Similarly, $k^{\vee}$- and $s$-variables satisfy
\begin{equation}
 \label{eq:kv-s-rel}
 k^{\vee}_{t}
  =\begin{cases}
    \displaystyle
    \sum_{v\to b} s_{b} - (s_{v}+s_{v}') & \text{if $\mu_{v}$ is
  green} ~(\varepsilon_{t}=1)\\
  \displaystyle \rule{0pt}{15pt}
    s_{v}+s_{v}' -\sum_{a\to v} s_{a}
 & \text{if $\mu_{v}$ is
  red} ~(\varepsilon_{t}=-1)\\
   \end{cases}
\end{equation}
Therefore,
\begin{equation}
 \label{eq:kv-k}
  k_{t} + k^{\vee}_{t}= \sum_{v\to b} s_{b} - \sum_{a\to v} s_{a}
\end{equation}
holds at each mutation.

It is sometimes useful to view the relation \eqref{eq:k-s-rel} as
a discrete time evolution of $s$-variables with the control parameters
$\{k_{t}\}$.  Let $s_{i}(t)$ be the $s$-variable
associated with the vertex $i$ in $Q(t)$. Then \eqref{eq:k-s-rel} can be
written as
\begin{equation}
 \label{eq:s-evolution}
 s_{i}(t)=
  \begin{cases}
   \rule{0pt}{14pt} 
   s_{i}(t{-}1) & \text{if $i\neq v$},
   \\
   \displaystyle \rule{0pt}{19pt}
   k_{t}-s_{v}(t{-}1)+\sum_{a}Q(t)_{a,v} s_{a}(t{-}1)
   & \text{if $i=v$ and $\mu_{v}$ is green},
   \\
   \displaystyle \rule{0pt}{13pt}
   -k_{t}-s_{v}(t{-}1)+\sum_{b}Q(t)_{v,b} s_{b}(t{-}1)
   & \text{if $i=v$ and $\mu_{v}$ is red}.
  \end{cases}
\end{equation}
With this notation, \eqref{eq:kv-k} reads as
\begin{equation}
 \label{eq:k-kv-rel}
   k_{t} + k_{t}^{\vee}=\sum_{i} B(t{-}1)_{v,i} \,s_{i}(t{-}1)
   = - \sum_{i} B(t{-}1)_{i,v}\, s_{i}(t{-}1).
\end{equation}
Consequently, we have
\begin{prop}
  \label{prop:kv-as-k-and-r}
Each $k^{\vee}$-variable is written as
a $\Z$-linear combination of $k$-variables
and initial $s$-variables.
\end{prop}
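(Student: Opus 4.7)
The plan is to prove the proposition by a straightforward induction on $t$, after rewriting equation \eqref{eq:k-kv-rel} as
\begin{equation*}
  k_{t}^{\vee} = -k_{t} + \sum_{i} B(t-1)_{v,i}\, s_{i}(t-1), \qquad v = m_{t}.
\end{equation*}
Since the entries of $B(t-1)$ are integers, it suffices to show that every $s_{i}(t-1)$ is a $\Z$-linear combination of the initial $s$-variables $s_{1}(0),\dots,s_{n}(0)$ and of $k_{1},\dots,k_{t-1}$.

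First I would establish this stronger intermediate claim by induction on $t$: for every $t\ge 0$ and every vertex $i$, the variable $s_{i}(t)$ is a $\Z$-linear combination of the initial $s$-variables and $k_{1},\dots,k_{t}$. The base case $t=0$ is immediate. For the inductive step, I would invoke the evolution rule \eqref{eq:s-evolution}. If $i\neq v = m_{t}$ then $s_{i}(t) = s_{i}(t-1)$ and we are done by the inductive hypothesis. If $i = v$, then
\begin{equation*}
  s_{v}(t) = \varepsilon_{t}\, k_{t} - s_{v}(t-1) + \sum_{a}\! N_{a}\, s_{a}(t-1),
\end{equation*}
where $N_{a}$ is either $Q(t)_{a,v}$ or $Q(t)_{v,a}$ depending on the sign $\varepsilon_{t}$; in either case $N_{a}\in\N\subset\Z$. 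Combining this with the inductive hypothesis applied to each $s_{a}(t-1)$ yields the desired $\Z$-linear expression for $s_{v}(t)$.

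Finally, plugging the intermediate claim back into the identity $k_{t}^{\vee} = -k_{t} + \sum_{i} B(t-1)_{v,i}\, s_{i}(t-1)$ gives $k_{t}^{\vee}$ as a $\Z$-linear combination of $k_{1},\dots,k_{t}$ and the initial $s$-variables, which is what the proposition asserts. There is no real obstacle: once the evolution rule \eqref{eq:s-evolution} is read off together with the fact that $Q(t)_{a,v}$ and $B(t-1)_{v,i}$ are integers, the proof is entirely bookkeeping. The only point requiring a small amount of care is to verify that integrality (not merely rationality) of the coefficients is preserved at every mutation step, which is ensured because the coefficients appearing in \eqref{eq:s-evolution} are $\pm 1$ together with the natural-number multiplicities $Q(t)_{a,v}$.
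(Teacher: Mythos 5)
Your proof is correct and is essentially the paper's own argument: the paper states the proposition as an immediate consequence of \eqref{eq:k-kv-rel} together with the evolution rule \eqref{eq:s-evolution}, and your induction on $t$ simply spells out the bookkeeping (integrality coming from the $\pm1$ coefficients and the multiplicities $Q(t)_{a,v}\in\N$) that the paper leaves implicit.
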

Hereafter, we write $r_i$ for the initial $s$-variables $s_i(0)$ at the vertex $i$ of $Q$ for $i \in \{1,2, \dots ,n\}$.

We now introduce a mutation weight.
\begin{definition}
 Consider a mutation $\mu_{v}$ at the
 vertex $v$. Then
 the \emph{mutation weight} of $\mu_{k}$ is given by
  \begin{equation}
    \label{eq:mutation weight}
    W^{\varepsilon}(k,k^{\vee})= q^{-\frac{1}{2}\varepsilon k k^{\vee}}
    \qb{k +k^{\vee}}{k}_{q^{\varepsilon}}.
  \end{equation}
 Here, $\varepsilon\in \{1,-1\}$ is the sign of
 $\mu_{v}$ and $k\in \N$, $k^{\vee}\in \Z$ are $k$-, $k^{\vee}$-variables 
 associated with the mutation $\mu_{v}$, respectively. The $q$-binomial coefficient is
 defined by
  \begin{equation}
    \label{eq:q-binomial coefficient}
    \qb{m}{k}_{q} = \frac{(q^{m-k+1};q)_{k}}{(q;q)_{k}}
  \end{equation}
  with $(x;q)_{k}  := \prod_{i=0}^{k-1}(1-xq^{i})$.
\end{definition}
\begin{rem}
  $W^{\varepsilon}(k,k^{\vee})$ is a Laurent polynomial of $q^{1/2}$.
  Note that we allow $k^{\vee}$ to be negative integer since our definition of $q$-binomial coefficient \eqref{eq:q-binomial coefficient}
  is valid even if $m<k$.
\end{rem}

Let $\alpha_{t}\in \N^{n}\setminus\{0\}$ denote the \emph{sign-corrected
$c$-vector}
\begin{equation}
  \label{eq:alpha-def}
 \alpha_{t}=\varepsilon_{t}c_{m_{t}}(t{-}1),
\end{equation}
namely, the product of the sign and the $c$-vector of the vertex on which
the $t$-th mutation takes place.

\begin{definition}
  Consider a quiver mutation sequence $\seq{m}=(m_{1},\cdots,m_{T})$
  with an initial quiver $Q$.  The \emph{partition function} of
  $\seq{m}$ with initial $s$-variables $r=(r_1, \dots ,r_n)\in \Z^{n}$
  is given by
  \begin{equation}
    \label{eq:partition function}
    Z_{\seq{m}}(r)=\sum_{k\geq 0} \biggl(
    \prod_{t=1}^{T}
    W^{\varepsilon_{t}}(k_{t},k_{t}^{\vee})
    \biggr)\;
    y^{\sum_{t=1}^{T}k_{t}\alpha_{t}}
    \quad \in \Ah_{Q}
  \end{equation}
 Here the sum is taken over $k=(k_1,\dots,k_T) \in \N^{T}$, and
 all the mutation weights
  $\{W^{\varepsilon_t}(k_t,k_t^{\vee})\}_{1\leq t \leq T}$ are
  considered as functions of $k \in \N^{T}$ and initial
  $s$-variables $r \in \Z^{n}$
  via Proposition \ref{prop:kv-as-k-and-r}.  Consequently, $Z_{\seq{m}}$
  defines a map $Z_{\seq{m}} : \Z^{n} \to \Ah_{Q}$, $r\mapsto
  Z_{\seq{m}}(r)$.
\end{definition}

\begin{example}
  \label{example:A2-part-poly}
  Let us consider the $A_2$ quiver
  \begin{equation}
    \label{eq:A2-quiver}
    Q=\vcenter{
    \xymatrix @R=6mm @C=6mm @M=4pt{
    1 \ar[r] & 2}
    }
  \end{equation}
  and a mutation sequence $\seq{m}=(1,2)$.
  See the left half of Figure \ref{fig:pentagon}.
  We with to compute the partition function $Z_{\seq{m}}(r_1,r_2)$.
  The sign-corrected $c$-vector $\alpha_t$ and the sign $\varepsilon_t$ are computed to be
  \begin{equation}
    \begin{split}
      & \alpha_1 = (1,0), \quad \varepsilon_1 = +, \\
      & \alpha_2 = (0,1), \quad \varepsilon_2 = +.
    \end{split}
  \end{equation}
  The $s$-variables change as follows (cf. \eqref{eq:s-evolution}):
  \begin{align*}
    \begin{array}{c|c|c}
          & s_1(t)        & s_2(t)                \\ \hline
      t=0 & r_1           & r_2                   \\
      t=1 & k_1-r_1       & r_2                   \\
      t=2 & k_1-r_1       & k_2-r_2               \\
    \end{array}
  \end{align*}
  The $k^{\vee}$-variables are given by
  \begin{equation}
    \begin{split}
      k_1^{\vee} &= -s_1(0)-s_1(1)+s_2(0) = -k_1 +r_2 \\
      k_2^{\vee} &= -s_2(1)-s_2(2)+s_1(1) = k_1 -k_2 -r_1 .
    \end{split}
  \end{equation}
  Putting these into the definition of mutation weight \eqref{eq:mutation weight} and
  summing over $k$-variables, we obtain
  \begin{equation}
    \begin{split}
      Z_{\seq{m}}(r_1,r_2) &= \sum_{k_1,k_2\geq 0} W(k_1,-k_1+r_2)W(k_2,k_1-k_2-r_1) y^{(k_1,k_2)} \\
      &= \sum_{k_1,k_2\geq 0} q^{\frac{1}{2}(k_1^2+k_2^2 - k_1 k_2 - k_1 r_2 + k_2 r_1)}
      \qb{r_2}{k_1}_{q} \qb{k_1-r_1}{k_2}_{q} y^{(k_1,k_2)}  .
    \end{split}
  \end{equation}
  For example,
 \begin{equation*}
  \begin{aligned}
      Z_{\seq{m}}(-2,1)=& 1 + y^{(1,0)} + (q^{-\frac{1}{2}} +
   q^{\frac{1}{2}}) y^{(0,1)} 
   \\
   &+(q^{-1} + 1 +q) y^{(1,1)} + y^{(0,2)}
   + (q^{-1} + 1 +q) y^{(1,2)} + y^{(1,3)} , 
   \\ Z_{\seq{m}}(2,2) =& 1 +
   (q^{-\frac{1}{2}} + q^{\frac{1}{2}}) y^{(1,0)} + (-q^{-\frac{1}{2}} -
   q^{\frac{1}{2}}) y^{(0,1)} \\ &+y^{(2,0)} + (-q^{-\frac{1}{2}} -
   q^{\frac{1}{2}}) y^{(1,1)} +(q^{-1} + 1 +q) y^{(0,2)} + \cdots.
\end{aligned} 
 \end{equation*}
  Note that the sum in $Z_{\seq{m}}(-2,1)$ is actually finite, whereas the sum in $Z_{\seq{m}}(2,2)$ is infinite.
\end{example}

\section{Main Results}
\label{sec:main}
\subsection{Statement of the main results}
Our first main result relates the partition functions to the products of
quantum dilogarithms
\begin{equation}
 \EE(x) =\EE(x;q) =\prod_{n=0}^{\infty} \frac{1}{1+q^{n+\frac{1}{2}}x}.
\end{equation}
\begin{theorem}
 \label{thm:main-1}
 Let $[x^n]F(x)$ denote the coefficient of $x^{n}$ in the series $F(x) =
 \sum a_{n} x^{n}$.
Then, for any $\beta\in \Z^{n}$, we have
  \begin{equation*}
    \boxed{
      \begin{split}
        & q^{\frac{1}{2}\vev{\beta,r}}  ([y^{\beta}]Z_{\seq{m}} (r))    \\
        &= [y^{\beta}]
        \left(
        (
        \EE(y^{\alpha_{1}};q^{\varepsilon_{1}})
        \cdots
        \EE(y^{\alpha_{T}};q^{\varepsilon_{T}})
        )^{-1}
        (
        \EE(q^{\vev{\alpha_1,r}} y^{\alpha_{1}};q^{\varepsilon_{1}})
        \cdots
        \EE(q^{\vev{\alpha_{T},r}}y^{\alpha_{T}};q^{\varepsilon_{T}})
        )
        \right)
      \end{split}
    }
  \end{equation*}
  where $\vev{~,~}$ is a skew symmetric bilinear form defined by
  \eqref{def:vev}.
\end{theorem}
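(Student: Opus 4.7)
My plan is to evaluate the coefficient of $y^\beta$ on both sides and compare them directly. The argument splits into a combinatorial identity coming from cluster-algebra data and a straightforward formal-series computation; the identity is where the real work lies.

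The combinatorial core is
\begin{equation*}
\varepsilon_t (k_t + k_t^\vee) \;=\; \vev{\alpha_t, r} + \sum_{s<t} k_s \vev{\alpha_s, \alpha_t}, \qquad (\star)
\end{equation*}
which I would prove by induction on $t$. Formula \eqref{eq:k-kv-rel} gives $k_t + k_t^\vee = \sum_j B(t-1)_{m_t,j}\, s_j(t-1)$; combined with the recursion \eqref{eq:s-evolution} for the $s$-variables, the matrix mutation rule \eqref{eq:matrix-mutation}, and the $c$-vector mutation rule, one expands the LHS of $(\star)$ as a $\Z$-linear combination in $r_i$'s and earlier $k_s$'s and identifies the coefficients with $\vev{\alpha_t,e_i}$ and $\vev{\alpha_s,\alpha_t}$, both computed via $B(0)$. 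In spirit, $(\star)$ is the incarnation in these $s$-variables of the $c$-vector/$g$-vector duality familiar from cluster algebras. I view this step as the main obstacle.

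Granted $(\star)$, I expand the RHS via
\begin{equation*}
\EE(x;q^\varepsilon)^{-1} = \sum_{a\geq 0}\frac{q^{\varepsilon a^2/2}}{(q^\varepsilon;q^\varepsilon)_a}\,x^a, \qquad \EE(q^c x;q^\varepsilon) = \sum_{b\geq 0}\frac{(-1)^b q^{\varepsilon b/2 + c b}}{(q^\varepsilon;q^\varepsilon)_b}\,x^b,
\end{equation*}
and evaluate the noncommutative product using $y^\gamma y^\delta = q^{\vev{\gamma,\delta}/2}y^{\gamma+\delta}$. The monomial $y^{a_T\alpha_T}\cdots y^{a_1\alpha_1}y^{b_1\alpha_1}\cdots y^{b_T\alpha_T}$ evaluates to $q^{S/2}y^{\sum_t k_t \alpha_t}$ with $k_t := a_t + b_t$; skew-symmetry and the factorization $-a_s a_{s'} + a_s b_{s'} - a_{s'}b_s + b_s b_{s'} = k_s(2b_{s'}-k_{s'})$ collapse $S$ to
\begin{equation*}
S \;=\; -\sum_{s<s'}k_s k_{s'}\vev{\alpha_s,\alpha_{s'}} \;+\; 2\sum_{s<s'}k_s b_{s'}\vev{\alpha_s,\alpha_{s'}}.
\end{equation*}
The first piece is independent of the $b$'s and factors out as a global prefactor, while the $b$-linear piece effectively shifts the exponent $c_t$ in each $\EE(q^{c_t}y^{\alpha_t})$-factor to $C_t := c_t + \sum_{s<t}k_s\vev{\alpha_s,\alpha_t}$ and decouples the $b_1,\ldots,b_T$ sums.

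Each decoupled $b_t$-sum is the coefficient of $x^{k_t}$ in $\EE(x;q^{\varepsilon_t})^{-1}\EE(q^{C_t}x;q^{\varepsilon_t})$, which telescopes into the finite product $\prod_{n=0}^{C_t/\varepsilon_t-1}(1+q^{\varepsilon_t(n+1/2)}x)$ and by the $q$-binomial theorem equals $q^{\varepsilon_t k_t^2/2}\qb{C_t/\varepsilon_t}{k_t}_{q^{\varepsilon_t}}$. Invoking $(\star)$ identifies $C_t/\varepsilon_t = k_t + k_t^\vee$, producing exactly the $q$-binomial appearing in $W^{\varepsilon_t}(k_t, k_t^\vee)$. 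The remaining prefactor comparison reduces to
\begin{equation*}
\sum_t \varepsilon_t k_t(k_t + k_t^\vee) \;=\; \vev{\beta, r} + \sum_{s<s'} k_s k_{s'}\vev{\alpha_s,\alpha_{s'}},
\end{equation*}
which is $(\star)$ multiplied by $k_t$ and summed over $t$, using $\sum_t k_t \vev{\alpha_t, r} = \vev{\beta, r}$ for $\beta = \sum_t k_t \alpha_t$. This completes the match and hence the theorem.
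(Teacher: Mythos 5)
Your formal-series half is correct, and it takes a genuinely different route from the paper: you expand every dilogarithm factor into its Euler series and reorder all monomials at once, whereas the paper proves the corresponding expansion (Proposition \ref{prop:prodE-qbinom-rel}) by induction on $T$, commuting one factor through at a time. Both ultimately rest on the same identity $\EE(x;q^{\varepsilon})^{-1}\EE(q^{m}x;q^{\varepsilon})=\sum_{k\geq0}q^{\frac{1}{2}\varepsilon k^{2}}\qb{\varepsilon m}{k}_{q^{\varepsilon}}x^{k}$. Your bookkeeping checks out: the factorization $-a_{s}a_{s'}+a_{s}b_{s'}-a_{s'}b_{s}+b_{s}b_{s'}=k_{s}(2b_{s'}-k_{s'})$, the decoupling at fixed $k$, and the identification of each $b_{t}$-sum with $q^{\frac{1}{2}\varepsilon_{t}k_{t}^{2}}\qb{\varepsilon_{t}C_{t}}{k_{t}}_{q^{\varepsilon_{t}}}$ are all right (minor nit: the ``finite product'' form of the telescoped ratio only holds when $\varepsilon_{t}C_{t}\geq0$; for negative values you need the reciprocal product as in \eqref{eq:binom-as-E-ratio}, though the $q$-binomial series identity is valid for all integers, so nothing breaks). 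Your derivation of the exponent matching \eqref{eq:deg of q} by multiplying $(\star)$ by $k_{t}$ and summing is a nice self-contained shortcut; the paper instead imports \eqref{eq:kkv-k2-rel} from [KT2].

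The genuine gap is $(\star)$ itself, which you correctly single out as the crux but do not prove: it is exactly the paper's Proposition \ref{prop:kkv-alpha-rel}. Your plan --- induct on $t$, expand $k_{t}+k_{t}^{\vee}=\sum_{i}B(t{-}1)_{m_{t},i}s_{i}(t{-}1)$ via \eqref{eq:s-evolution} and \eqref{eq:matrix-mutation}, and ``identify the coefficients'' of the $r_{i}$ and earlier $k_{s}$ with $\vev{\alpha_{t},e_{i}}$ and $\vev{\alpha_{s},\alpha_{t}}$ --- is a restatement of the claim rather than an argument: those coefficients depend on the whole mutation history through $B(t{-}1)$ and the $s$-variables, and the naive induction does not close without introducing the right invariants. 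The missing idea in the paper is the state vector $\psi(t)=\sum_{i}s_{i}(t)c_{i}(t)$ of \eqref{eq:psi-def} together with two facts from [KT2]: $B(t)_{ij}=\vev{c_{i}(t),c_{j}(t)}$ (equivalently $C(t)B(0)C(t)^{\top}=B(t)$, which relies on sign coherence of $c$-vectors) and the evolution $\psi(t)=\psi(t{-}1)-k_{t}\alpha_{t}$. Granted these, \eqref{eq:k-kv-rel} gives $k_{t}+k_{t}^{\vee}=\vev{c_{m_{t}}(t{-}1),\psi(t{-}1)}=\varepsilon_{t}\bigl(\vev{\alpha_{t},\psi(0)}+\sum_{s<t}\vev{\alpha_{s},\alpha_{t}}k_{s}\bigr)$, i.e.\ $(\star)$, in two lines. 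So either prove (or cite) these two facts, or carry out your induction in full detail; as written, the step where, in your own words, the real work lies is only asserted.
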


Thanks to Theorem \ref{thm:main-1}, any identity between products of
quantum dilogarithm series leads to that for the partition functions.
In particular, applying this result to the mutation sequences defining
combinatorial Donaldson-Thomas invariants, we obtain our second main result:
\begin{theorem}
  \label{prop:Zm=Zm'}
  If $\seq{m}$ and $\seq{m}'$ are two mutation
  sequences such that there is a frozen isomorphism between
  $\mu_{\seq{m}}(Q^{\wedge})$ and $\mu_{\seq{m}'}(Q^{\wedge})$, then we
  have
  $Z_{\seq{m}} = Z_{\seq{m}'}$.
\end{theorem}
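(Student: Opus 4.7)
The plan is to deduce Theorem \ref{prop:Zm=Zm'} from Theorem \ref{thm:main-1} by packaging the right-hand side as an automorphism applied to the unshifted product
\begin{equation*}
 X_{\seq{m}} := \EE(y^{\alpha_{1}};q^{\varepsilon_{1}})\cdots\EE(y^{\alpha_{T}};q^{\varepsilon_{T}}) \in \Ah_{Q},
\end{equation*}
and then invoking the Kontsevich--Soibelman / Keller quantum dilogarithm identity to conclude that $X_{\seq{m}}$ itself is an invariant of the frozen-isomorphism class of $\mu_{\seq{m}}(Q^{\wedge})$.

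For the first step, I would introduce, for each $r\in \Z^{n}$, the map $\phi_{r}$ on $\Ah_{Q}$ defined on generators by $\phi_{r}(y_{i}) := q^{(Br)_{i}}y_{i}$, where $(Br)_{i} = \sum_{j}B_{ij}r_{j}$. The short computation
\begin{equation*}
 \phi_{r}(y_{i})\phi_{r}(y_{j}) = q^{(Br)_{i}+(Br)_{j}} y_{i}y_{j} = q^{B_{ij}}\phi_{r}(y_{j})\phi_{r}(y_{i})
\end{equation*}
shows that $\phi_{r}$ preserves the defining relations of $\A_{Q}$, so it extends uniquely to an $\N^{n}$-graded algebra automorphism of $\Ah_{Q}$. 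Using \eqref{eq:y-monom} one checks $\phi_{r}(y^{\alpha}) = q^{\vev{\alpha,r}}y^{\alpha}$, and therefore $\phi_{r}\bigl(\EE(y^{\alpha};q^{\varepsilon})\bigr) = \EE\bigl(q^{\vev{\alpha,r}}y^{\alpha};q^{\varepsilon}\bigr)$. Applying this factor by factor to $X_{\seq{m}}$ rewrites the identity of Theorem \ref{thm:main-1} as $q^{\frac{1}{2}\vev{\beta,r}}[y^{\beta}]Z_{\seq{m}}(r) = [y^{\beta}]\bigl(X_{\seq{m}}^{-1}\phi_{r}(X_{\seq{m}})\bigr)$.

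Because $\phi_{r}$ is a homomorphism, the equality $X_{\seq{m}} = X_{\seq{m}'}$ in $\Ah_{Q}$ automatically upgrades to $X_{\seq{m}}^{-1}\phi_{r}(X_{\seq{m}}) = X_{\seq{m}'}^{-1}\phi_{r}(X_{\seq{m}'})$ for every $r$, which via the reformulated Theorem \ref{thm:main-1} yields $[y^{\beta}]Z_{\seq{m}}(r) = [y^{\beta}]Z_{\seq{m}'}(r)$ for all $\beta\in \Z^{n}$ and $r\in \Z^{n}$, i.e. $Z_{\seq{m}} = Z_{\seq{m}'}$. Everything therefore reduces to the unshifted identity $X_{\seq{m}} = X_{\seq{m}'}$ under the hypothesis that $\mu_{\seq{m}}(Q^{\wedge})$ and $\mu_{\seq{m}'}(Q^{\wedge})$ are frozen isomorphic. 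This is the quantum dilogarithm identity in its $c$-vector formulation of Kontsevich--Soibelman \cite{KS2008} and Keller \cite{Keller2011,Keller2013a}, which I would invoke directly. The main obstacle is not producing any new combinatorics but matching conventions: a nontrivial frozen isomorphism may permute the mutable vertices, so the two sequences $(\alpha_{t},\varepsilon_{t})$ and $(\alpha'_{t},\varepsilon'_{t})$ will in general differ both in length and term by term, and one must verify that the cited theorem nonetheless gives equality of the two ordered products of $\EE$-factors in $\Ah_{Q}$.
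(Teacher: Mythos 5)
Your proposal is correct and follows essentially the same route as the paper: the paper also deduces the theorem from Theorem \ref{thm:main-1} together with the Keller--Nagao identity $\EE(Q;\seq{m})=\EE(Q;\seq{m}')$ (its Theorem \ref{thm:EEm=EEm'}, cited from \cite{Keller2012,Nagao2011}), which is exactly your unshifted identity $X_{\seq{m}}=X_{\seq{m}'}$ stated already in the needed $c$-vector/frozen-isomorphism form. The only difference is cosmetic: where the paper passes to the shifted products by the one-line remark that $(q^{n_t}y^{\alpha_t})$ obey the same commutation relations, you implement the specific shifts $n_t=\vev{\alpha_t,r}$ via the explicit automorphism $\phi_r$, which is a slightly cleaner justification of the same step.
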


First we will prove Theorem \ref{thm:main-1} in Section
\ref{sec:thm1-proof}.
Then the proof of Theorem \ref{prop:Zm=Zm'} will be given in Section
\ref{sec:thm2-proof}.

\subsection{Proof of Theorem \ref{thm:main-1}}
\label{sec:thm1-proof}
We first rewrite the product of quantum
dilogarithms in terms of $q$-binomials as follows:
\begin{prop}
  \label{prop:prodE-qbinom-rel} For any mutation sequence
  $\seq{m}=(m_{1},\dots, m_{T})$ and for any $n_1 ,\dots ,n_T \in \Z$,
  we have
  \begin{equation}
    \label{eq:prodE-qbinom-rel}
    \begin{split}
      \left(
      \EE(y^{\alpha_{1}};q^{\varepsilon_{1}})
      \cdots
      \EE(y^{\alpha_{T}};q^{\varepsilon_{T}})
      \right)^{-1}
      \left(
      \EE(q^{n_{1}} y^{\alpha_{1}};q^{\varepsilon_{1}})
      \cdots
      \EE(q^{n_{T}} y^{\alpha_{T}};q^{\varepsilon_{T}})
      \right) \\
      =
      \sum_{k\geq 0} \left( \prod_{t=1}^{T} q^{\frac{1}{2}\varepsilon_{t} k_{t}^{2}}
      \qb{\varepsilon_t ( n_t + \sum_{i=1}^{t-1}  \vev{\alpha_i , \alpha_t} k_i)}{k_t}_{q^{\varepsilon_t}} \right)
      y^{k_T \alpha_T} \cdots y^{k_1 \alpha_1} .
    \end{split}
  \end{equation}
\end{prop}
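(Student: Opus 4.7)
The plan is a straightforward induction on $T$, driven by two ingredients.

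The analytic ingredient is the classical one-variable identity
\[
\EE(x;q)^{-1}\EE(q^{n}x;q)=\sum_{k\ge 0}q^{k^{2}/2}\qb{n}{k}_{q}x^{k},
\]
which is the finite $q$-binomial expansion of $\prod_{j=0}^{n-1}(1+q^{j+1/2}x)$ for $n\ge 0$ and extends to all $n\in\Z$ via the $q$-Pochhammer definition $\qb{m}{k}_{q}=(q^{m-k+1};q)_{k}/(q;q)_{k}$. Setting $q\mapsto q^{\varepsilon_t}$, $x\mapsto y^{\alpha_t}$, $n\mapsto \varepsilon_t n_t$ handles the base case $T=1$. The algebraic ingredient is the braiding rule $y^{\alpha_T}y^{\beta}=q^{\vev{\alpha_T,\beta}}y^{\beta}y^{\alpha_T}$, which lifts to $f(y^{\alpha_T})\,y^{\beta}=y^{\beta}\,f(q^{\vev{\alpha_T,\beta}}y^{\alpha_T})$ for any formal power series $f$.

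For the inductive step I would write the left-hand side as
\[
\EE(y^{\alpha_T};q^{\varepsilon_T})^{-1}\Bigl(\sum_{k'}\Phi_{k'}\,y^{k_{T-1}\alpha_{T-1}}\cdots y^{k_1\alpha_1}\Bigr)\EE(q^{n_T}y^{\alpha_T};q^{\varepsilon_T}),
\]
where $k'=(k_1,\dots,k_{T-1})$ and $\Phi_{k'}$ is the weight supplied by the inductive hypothesis, and use the braiding rule to slide both outer dilogarithm factors past the central monomial. Setting $N:=\sum_{i<T}\vev{\alpha_T,\alpha_i}k_i$, this produces the combined factor $\EE(q^{N}y^{\alpha_T};q^{\varepsilon_T})^{-1}\EE(q^{n_T}y^{\alpha_T};q^{\varepsilon_T})$, to which the base-case identity applies with effective shift $n_T-N$.

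The proof then reduces to a single cancellation, and this is the only delicate point. The one-variable expansion yields a factor $q^{\varepsilon_T k_T^{2}/2+Nk_T}$ in front of $y^{k_T\alpha_T}$. To match the target ordering one must then transport $y^{k_T\alpha_T}$ back to the leftmost position past $y^{k_{T-1}\alpha_{T-1}}\cdots y^{k_1\alpha_1}$, which by the same braiding rule contributes a compensating $q^{-Nk_T}$. The two annihilate exactly, leaving the clean weight $q^{\varepsilon_T k_T^{2}/2}$; and skew-symmetry $\vev{\alpha_T,\alpha_i}=-\vev{\alpha_i,\alpha_T}$ turns $n_T-N$ into $n_T+\sum_{i<T}\vev{\alpha_i,\alpha_T}k_i$, matching the $q$-binomial in the statement. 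The main obstacle is thus keeping this two-step bookkeeping consistent, tracking the direction of every commutation and the sign of every appearance of $\vev{\,,\,}$, and noting that the cancellation relies implicitly on $\vev{\alpha_t,\alpha_t}=0$ so that $(y^{\alpha_t})^{k_t}=y^{k_t\alpha_t}$ without extra $q$-prefactors.
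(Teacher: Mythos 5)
Your proposal is correct and follows essentially the same route as the paper: induction on $T$, the one-variable expansion $\EE(x;q^{\varepsilon})^{-1}\EE(q^{m}x;q^{\varepsilon})=\sum_{k\ge 0}q^{\frac{1}{2}\varepsilon k^{2}}\qb{\varepsilon m}{k}_{q^{\varepsilon}}x^{k}$ as base case, and the braiding of $y$-monomials for the inductive step. The only difference is bookkeeping: the paper commutes $\EE(q^{n_T}y^{\alpha_T};q^{\varepsilon_T})$ leftward past the monomial string so that $y^{k_T\alpha_T}$ lands in its final position with no extra $q$-powers, whereas you slide the inverse factor rightward and then transport $y^{k_T\alpha_T}$ back, with the pair $q^{\pm Nk_T}$ cancelling exactly as you indicate.
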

\begin{proof}
  A key is to express
  $q$-binomial coefficients as the ``ratio'' of $q$-dilogarithms:
 \begin{equation}
  \label{eq:binom-as-E-ratio}
  \begin{split}
    \EE(x;q)^{-1}\EE(q^{m}x;q) &=
    \begin{cases}
      \prod_{i=0}^{m-1}\left(1+q^{i+\frac{1}{2}}x\right) &(m \geq 0)  \\
      \prod_{i=0}^{-m-1}\left(1+q^{-i-\frac{1}{2}}x\right)^{-1} &(m<0)
    \end{cases}\\
    &=\sum_{k\geq 0}q^{\frac{1}{2}k^{2}}\qb{m}{k}_{q} x^{k}   
  \end{split}
 \end{equation}
  by using the $q$-binomial theorem. Moreover,
  by substituting $q^{-1}$ for $q$ and $-m$ for $m$, we get
  \begin{equation}
    \label{eq:binom-as-E-ratio-sign}
    \EE(x;q^{\varepsilon})^{-1}\EE(q^{m}x;q^{\varepsilon}) =
    \sum_{k\geq 0}q^{\frac{1}{2} \varepsilon k^{2}}\qb{\varepsilon m}{k}_{q^{\varepsilon}} x^{k}
  \end{equation}
  for $\varepsilon \in \{1, -1\}$.
  Then the proof is by induction on the length $T$ of the mutation sequence.
  For $T=0$, equality is obvious because both sides of \eqref{eq:prodE-qbinom-rel} are equal to $1$.
  For $T>0$, we set
  \begin{align*}
    A= \prod_{t=1}^{T-1} q^{\frac{1}{2}\varepsilon_{t} k_{t}^{2}}
    \qb{\varepsilon_t ( n_t + \sum_{i=1}^{t-1}  \vev{\alpha_i , \alpha_t} k_i)}{k_t}_{q^{\varepsilon_t}}.
  \end{align*}
  By induction hypothesis, we have
  \begin{align*}
    &\left(
    \EE(y^{\alpha_{1}};q^{\varepsilon_{1}})
    \cdots
    \EE(y^{\alpha_{T-1}};q^{\varepsilon_{T-1}})
    \right)^{-1}
    \left(
    \EE(q^{n_{1}} y^{\alpha_{1}};q^{\varepsilon_{1}})
    \cdots
    \EE(q^{n_{T-1}} y^{\alpha_{T-1}};q^{\varepsilon_{T-1}})
    \right) \\
   &    = \sum_{k_1,\dots k_{T-1}\geq 0} A y^{k_{T-1} \alpha_{T-1}} \cdots y^{k_1 \alpha_1} .
  \end{align*}
  Using the commutation relation
  \begin{align*}
    (y^{k_{T-1} \alpha_{T-1}} \cdots y^{k_1 \alpha_1} ) y^{\alpha_T} =
    q^{ \sum_{i=1}^{T-1} \vev{\alpha_i, \alpha_T}k_i} y^{\alpha_T}
    (y^{k_{T-1} \alpha_{T-1}} \cdots y^{k_1 \alpha_1} )
  \end{align*}
  and \eqref{eq:binom-as-E-ratio-sign}, we obtain
  \begin{align*}
    & \left(
    \EE(y^{\alpha_{1}};q^{\varepsilon_{1}})
    \cdots
    \EE(y^{\alpha_{T}};q^{\varepsilon_{T}})
    \right)^{-1}
    \left(
    \EE(q^{n_{1}} y^{\alpha_{1}};q^{\varepsilon_{1}})
    \cdots
    \EE(q^{n_{T}}y^{\alpha_{T}};q^{\varepsilon_{T}})
    \right) \\
    &=
    \EE(y^{\alpha_{T}};q^{\varepsilon_{T}})^{-1}
    \left(
    \sum_{k_1,\dots k_{T-1}\geq 0} A  y^{k_{T-1} \alpha_{T-1}} \cdots y^{k_1 \alpha_1}
    \right)
    \EE(q^{n_{T}}y^{\alpha_{T}};q^{\varepsilon_{T}}) \\
    &=
    \sum_{k_1,\dots k_{T-1}\geq 0}
    \EE(y^{\alpha_{T}};q^{\varepsilon_{T}})^{-1}
    \EE(q^{n_{T}+ \sum_{i=1}^{T-1} \vev{\alpha_i, \alpha_T}k_i}y^{\alpha_{T}};q^{\varepsilon_{T}})
    A  y^{k_{T-1} \alpha_{T-1}} \cdots y^{k_1 \alpha_1} \\
    &=
    \sum_{k \geq 0}
    q^{\frac{1}{2}\varepsilon_T k_T^2} \qb{\varepsilon_T (n_T + \sum_{i=1}^{T-1} \vev{\alpha_i, \alpha_T}k_i )}{k_T}_{q^{\varepsilon_T}}
    y^{k_T \alpha_T}
    A y^{k_T \alpha_{T-1}} \cdots y^{k_1 \alpha_1} \\
    &=
    \sum_{k\geq 0} \left( \prod_{t=1}^{T} q^{\frac{1}{2}\varepsilon_{t} k_{t}^{2}}
    \qb{\varepsilon_t(n_t + \sum_{i=1}^{t-1}  \vev{\alpha_i , \alpha_t} k_i )}{k_t}_{q^{\varepsilon_t}} \right)
    y^{k_T \alpha_T} \cdots y^{k_1 \alpha_1} .
  \end{align*}
  This completes the proof of Proposition \ref{prop:prodE-qbinom-rel}.
\end{proof}

We introduce the notion of a state vector considered in \cite{KT2}.  The
\emph{state vector} of $Q(t)$ is defined by
\begin{equation}
 \label{eq:psi-def} \psi(t):=\sum_{i=1}^{n} s_{i}(t) c_{i}(t) ~\in
 \Z^{n}\qquad (0\leq t\leq T).
\end{equation}
By the definition of $c$-vectors,
we have $\psi(0)=(s_1(0) ,\dots , s_n(0) ) = r$.

We use the following properties shown in \cite{KT2}
\footnote{Note that $k_{t}^{\vee}$ in this paper is $-k_{t}^{\vee}$ in \cite{KT2}.}:
\begin{itemize}
\item
For the mutation sequence \eqref{eq:Q-seq-2}, we have
 \begin{equation}
  \label{eq:Bcc-t}
  B(t)_{ij}=\vev{c_{i}(t),c_{j}(t)},\qquad (0\leq t\leq T)
 \end{equation}
 or equivalently,
 \begin{equation}
  \label{eq:CBC=B}
   C(t)B(0)C(t)^{\top} =B(t).\qquad  (0\leq t\leq T)
 \end{equation}

\item
 Along the mutation sequence \eqref{eq:Q-seq-2},
 the state vector changes as
 \begin{equation}
  \label{eq:psi-change}
   \psi(t)=
   \psi(t{-}1)   -k_{t} \alpha_{t},\qquad (t=1,\dots,T).
 \end{equation}

\item
The state vectors of the initial and the final
 quivers are related as
 \begin{equation}
    \label{eq:psi-in-out}
     \psi(0)-\psi(T)=\sum_{t=1}^{T}k_{t}\alpha_{t}.
 \end{equation}

\item
For any mutation sequence, we have
 \begin{equation}
  \label{eq:kkv-k2-rel}
   \rule[-6pt]{0pt}{20pt}
   -\sum_{t=1}^{T}\varepsilon_{t}k_{t}k_{t}^{\vee}
   +\vev{\psi(0), \psi(T)}
   =  \sum_{t=1}^{T}\varepsilon_{t}k_{t}^{2}
   - \sum_{1\leq i< j\leq
   T}k_{i}k_{j}\vev{\alpha_{i},\alpha_{j}}.
 \end{equation}

\end{itemize}

We also need the following:
\begin{prop}
  \label{prop:kkv-alpha-rel}
For any mutation sequences, we have
\begin{equation}
  \label{eq:kkv-alpha-rel}
  k_t + k_t^{\vee} = \varepsilon_t
  (\vev{\alpha_t, \psi(0)} + \sum_{i=1}^{t-1} \vev{\alpha_i , \alpha_t} k_i ).
\end{equation}
\end{prop}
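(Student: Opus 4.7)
The plan is to compute $k_t + k_t^\vee$ by combining the identity \eqref{eq:k-kv-rel} with the $c$-vector realization \eqref{eq:Bcc-t} of the exchange matrix, and then unfold the state vector at time $t-1$ back to the initial one via \eqref{eq:psi-change}.

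More concretely, letting $v = m_t$, equation \eqref{eq:k-kv-rel} gives
\begin{equation*}
k_t + k_t^\vee = \sum_i B(t{-}1)_{v,i}\, s_i(t{-}1).
\end{equation*}
Using \eqref{eq:Bcc-t} to rewrite $B(t{-}1)_{v,i} = \vev{c_v(t{-}1), c_i(t{-}1)}$ and pulling the sum inside the bilinear form via linearity, this becomes $\vev{c_{m_t}(t{-}1),\, \psi(t{-}1)}$. By the definition \eqref{eq:alpha-def} of the sign-corrected $c$-vector, $c_{m_t}(t{-}1) = \varepsilon_t \alpha_t$, so
\begin{equation*}
k_t + k_t^\vee = \varepsilon_t \vev{\alpha_t,\, \psi(t{-}1)}.
\end{equation*}

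Next, I would iterate \eqref{eq:psi-change} to get the telescoping identity $\psi(t{-}1) = \psi(0) - \sum_{i=1}^{t-1} k_i \alpha_i$. Substituting this in and using skew-symmetry $\vev{\alpha_t,\alpha_i} = -\vev{\alpha_i,\alpha_t}$ converts the sum in the desired form:
\begin{equation*}
k_t + k_t^\vee = \varepsilon_t \Bigl( \vev{\alpha_t, \psi(0)} + \sum_{i=1}^{t-1} \vev{\alpha_i, \alpha_t}\, k_i \Bigr),
\end{equation*}
which is exactly \eqref{eq:kkv-alpha-rel}.

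There is no real obstacle: the statement is essentially a one-line rewriting once one identifies $B(t{-}1)_{v,i}$ with a pairing of $c$-vectors and expands $\psi(t{-}1)$ in terms of the initial state vector. The only thing to be careful about is bookkeeping of the sign: the factor $\varepsilon_t$ on the right-hand side arises from $c_{m_t}(t{-}1) = \varepsilon_t \alpha_t$, and the flip from $\vev{\alpha_t, \alpha_i}$ to $\vev{\alpha_i, \alpha_t}$ uses skew-symmetry rather than any sign-coherence input.
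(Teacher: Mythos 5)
Your proposal is correct and follows exactly the paper's own argument: start from \eqref{eq:k-kv-rel}, rewrite $B(t{-}1)_{m_t,i}$ via \eqref{eq:Bcc-t} to get $\vev{c_{m_t}(t{-}1),\psi(t{-}1)}$, substitute $c_{m_t}(t{-}1)=\varepsilon_t\alpha_t$, telescope $\psi(t{-}1)=\psi(0)-\sum_{i=1}^{t-1}k_i\alpha_i$ via \eqref{eq:psi-change}, and finish with skew-symmetry. No differences worth noting.
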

\begin{proof}
  \begin{align*}
    k_t + k_t^{\vee} &
      =\sum_{i=1}^n B(t-1)_{m_t,i} s_i(t-1)
      &(\text{by \eqref{eq:k-kv-rel}}) \\
      &=\sum_{i=1}^n \vev{c_{m_t}(t-1),c_{i}(t-1)} s_i(t-1)
      &(\text{by \eqref{eq:Bcc-t}}) \\
      &=\vev{c_{m_t}(t-1),\sum_{i=1}^n c_i(t-1) s_i(t-1)} \\
      &=\vev{c_{m_t}(t-1), \psi(t-1)}
      &(\text{by \eqref{eq:psi-def}})\\
      &=\vev{\varepsilon_t \alpha_t, \psi(t-1)}
      &(\text{by \eqref{eq:alpha-def}})\\
      &=\vev{\varepsilon_t \alpha_t, \psi(0) - \sum_{i=1}^{t-1} \alpha_i k_i}
      &(\text{by \eqref{eq:psi-change}}) \\
      &=\varepsilon_t ( \vev{\alpha_t, \psi(0)} + \sum_{i=1}^{t-1} \vev{\alpha_i , \alpha_t}k_i) .
      &(\text{by skewness of $\vev{~,~}$})
  \end{align*}
\end{proof}

We are now ready to prove Theorem \ref{thm:main-1}.
The partition function associated with the mutation sequence $\seq{m}$ and initial $s$-variables
$r=(r_1 , \dots , r_n)$ is given by
\begin{align*}
  Z_{\seq{m}}(r) &=\sum_{k\geq 0} \biggl(
  \prod_{t=1}^{T}
  W^{\varepsilon_{t}}(k_{t},k_{t}^{\vee})
  \biggr)\;
  y^{\sum_{t=1}^{T}k_{t}\alpha_{t}} \\
  &= \sum_{k\geq 0} \biggl(
  \prod_{t=1}^{T}
  q^{-\frac{1}{2} \varepsilon_t k_t k_t^{\vee} }
  \qb{k_{t}+k_{t}^{\vee}}{k_t}_{q^{\varepsilon_t}}
  \biggr)\;
  y^{\sum_{t=1}^{T}k_{t}\alpha_{t}}.
\end{align*}
On the other hand, the ratio of the quantum dilogarithms products along $\seq{m}$ is given by
\begin{align*}
  &\left(
  \EE(y^{\alpha_{1}};q^{\varepsilon_{1}})
  \cdots
  \EE(y^{\alpha_{T}};q^{\varepsilon_{T}})
\right)^{-1}
\EE(q^{\vev{\alpha_{1},r}} y^{\alpha_{1}};q^{\varepsilon_{1}})
\cdots
\EE(q^{\vev{\alpha_{T},r}}y^{\alpha_{T}};q^{\varepsilon_{T}}) \\
&=\sum_{k\geq 0} \left( \prod_{t=1}^{T} q^{\frac{1}{2}\varepsilon_{t} k_{t}^{2}}
\qb{\varepsilon_t ( \vev{\alpha_t , r} + \sum_{i=1}^{t-1}  \vev{\alpha_i , \alpha_t} k_i)}{k_t}_{q^{\varepsilon_t}} \right)
y^{k_T \alpha_T} \cdots y^{k_1 \alpha_1} \\
&=\sum_{k\geq 0} \left( \prod_{t=1}^{T} q^{\frac{1}{2}\varepsilon_{t} k_{t}^{2}}
\qb{\varepsilon_t ( \vev{\alpha_t , \psi(0)} + \sum_{i=1}^{t-1}  \vev{\alpha_i , \alpha_t} k_i)}{k_t}_{q^{\varepsilon_t}} \right)
y^{k_T \alpha_T} \cdots y^{k_1 \alpha_1} \\
&=\sum_{k\geq 0} \left(\prod_{t=1}^{T} q^{\frac{1}{2} \varepsilon_{t} k_{t}^{2}}
\qb{\varepsilon_t ( \vev{\alpha_t , \psi(0)} + \sum_{i=1}^{t-1}  \vev{\alpha_i , \alpha_t} k_i)}{k_t}_{q^{\varepsilon_t}} \right)
q^{-\frac{1}{2} \sum_{1\leq i \leq j \leq T} k_i k_j \vev{\alpha_i , \alpha_j} }y^{\sum_{t=1}^T k_t \alpha_t}.
\end{align*}
Therefore, to prove Theorem \ref{thm:main-1}, it suffices to show that the
following equalities:
\begin{align}
  \label{eq:param of binom}
  &k_t + k_t^{\vee} = \varepsilon_t ( \vev{\alpha_t , \psi(0)} + \sum_{i=1}^{t-1}  \vev{\alpha_i , \alpha_t} k_i),
\end{align}
\begin{align}
  \label{eq:deg of q}
  &-\frac{1}{2} \sum_{t=1}^T \varepsilon_t  k_t k_t^{\vee}
  + \frac{1}{2} \vev{\sum_{t=1}^T k_t \alpha_t , r}
  = \frac{1}{2} \sum_{t=1}^T \varepsilon_t k_t^2
  -\frac{1}{2} \sum_{1\leq i \leq j \leq T} k_i k_j \vev{\alpha_i , \alpha_j}.
\end{align}
The equality \eqref{eq:param of binom} follows immediately from
Proposition \ref{prop:kkv-alpha-rel}.  \eqref{eq:deg of q} follows from
\eqref{eq:kkv-k2-rel} because
\begin{align*}
  \Bvev{\sum_{t=1}^T k_t \alpha_t , r} =
  \vev{\psi(0)-\psi(T),\psi(0)}=\vev{\psi(0),\psi(T)}
\end{align*}
by \eqref{eq:psi-in-out} and the skewness of $\vev{~,~}$.
This completes the proof of Theorem \ref{thm:main-1}.

\subsection{Proof of Theorem \ref{prop:Zm=Zm'}}
\label{sec:thm2-proof} 
We first review some known results about quantum dilogarithms. For a mutation sequence
$\seq{m}=(m_1 ,\dots ,m_{T})$ of $Q$ we consider the following ordered
product of quantum dilogarithms:
\begin{equation}
  \EE(Q;\seq{m}):=
  \EE(y^{\alpha_{1}};q^{\varepsilon_{1}})
  \EE(y^{\alpha_{2}};q^{\varepsilon_{2}})
  \cdots
  \EE(y^{\alpha_{T}};q^{\varepsilon_{T}})\quad \in \Ah_{Q}.
\end{equation}

\begin{theorem}[Keller\cite{Keller2012}, Nagao\cite{Nagao2011}]
 \label{thm:EEm=EEm'} If $\seq{m}$ and $\seq{m}'$ are two mutation
 sequences such that there is a frozen isomorphism between
 $\mu_{\seq{m}}(Q^{\wedge})$ and $\mu_{\seq{m}'}(Q^{\wedge})$, then we
 have $\EE(Q;\seq{m})=\EE(Q;\seq{m}')$.
\end{theorem}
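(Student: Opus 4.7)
The strategy is to view both $\EE(Q;\seq{m})$ and $\EE(Q;\seq{m}')$ as elements of a pronilpotent group acting on $\Ah_Q$ by conjugation, and to show that this adjoint action is determined entirely by the frozen isomorphism class of the final ice quiver. The auxiliary object is the family of quantum $y$-variables $Y_i(t)\in\Ah_Q$ attached to the ice quivers $\widetilde{Q}(t)$, evolving under Fock--Goncharov quantum $y$-seed mutations.

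The first step is the classical computation, due to Fock--Goncharov, that conjugation by a single quantum dilogarithm implements exactly one quantum $y$-mutation. Using only the $q$-difference equation $\EE(qx;q)=(1+q^{1/2}x)\EE(x;q)$ together with its $q\mapsto q^{-1}$ counterpart, one checks directly on each monomial $y^{\beta}$ that $\mathrm{Ad}(\EE(y^{\alpha_t};q^{\varepsilon_t}))$ sends $y^{\beta}$ to $y^{\beta}$ multiplied by an explicit rational expression in $y^{\alpha_t}$, whose exponent is controlled by $\vev{\alpha_t,\beta}$ and $\varepsilon_t$. Matching this against the quantum $y$-seed mutation rule, an induction on $t$ yields
\begin{equation*}
  \mathrm{Ad}\bigl(\EE(y^{\alpha_1};q^{\varepsilon_1})\cdots\EE(y^{\alpha_t};q^{\varepsilon_t})\bigr)(y^{e_i}) = Y_i(t),
\end{equation*}
where $Y_i(t)\in\Ah_Q$ is the image of the $i$th quantum $y$-variable of $\widetilde{Q}(t)$ expressed in the initial generators. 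Sign coherence of $c$-vectors is what guarantees that each step falls cleanly into either the $\varepsilon_t=+1$ or the $\varepsilon_t=-1$ case, so no hidden cancellations of fractional factors occur.

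With this identification in hand, a frozen isomorphism $\phi:\mu_{\seq{m}}(Q^{\wedge})\to\mu_{\seq{m}'}(Q^{\wedge})$ is the identity on frozen vertices, so the final $c$-matrices $C(T)$ and $C'(T')$ coincide up to the permutation of mutable rows induced by $\phi$; consequently the final quantum $y$-variables $Y_{i}(T)$ and $Y'_{\phi(i)}(T')$ agree as elements of $\Ah_Q$. Hence $\mathrm{Ad}(\EE(Q;\seq{m}))$ and $\mathrm{Ad}(\EE(Q;\seq{m}'))$ act identically on the topological generators $y_1,\dots,y_n$, therefore on all of $\Ah_Q$. A standard lowest-degree-term argument --- writing $\log \EE(y^{\alpha};q^{\varepsilon})$ as a Lie series in the $\N^{n}$-graded completion and observing that the adjoint representation of the resulting pronilpotent group on $\Ah_Q$ is faithful --- upgrades equality of conjugations to equality of elements, $\EE(Q;\seq{m})=\EE(Q;\seq{m}')$.

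The main obstacle lies in the quantum $y$-mutation intertwining of the first step: one must verify that $\mathrm{Ad}(\EE(y^{\alpha_t};q^{\varepsilon_t}))$ reproduces \emph{exactly} the quantum $y$-mutation at vertex $m_t$ of $\widetilde{Q}(t-1)$, matching the green/red sign convention with the $q\leftrightarrow q^{-1}$ exchange for $\varepsilon_t=-1$ and keeping track of all $q^{1/2}$ prefactors arising from the bilinear form $\vev{~,~}$. Once that intertwining is secured, the passage from identical final $c$-matrix to identical adjoint action to identical element is purely structural.
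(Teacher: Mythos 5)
This theorem is not proved in the paper at all: it is imported from Keller and Nagao, and the authors explicitly remark (just after Theorem \ref{thm:EEm-reddening}) that although the statement is combinatorial, ``the known proofs are based on categorification in terms of Ginzburg dg-algebra.'' So your attempt should be judged as a standalone proof, and it has two genuine gaps. First, the decisive step --- ``the final $c$-matrices coincide up to the permutation induced by $\phi$; consequently the final quantum $y$-variables $Y_i(T)$ and $Y'_{\phi(i)}(T')$ agree'' --- is precisely the hard content of the theorem, not a consequence of elementary bookkeeping. A frozen isomorphism gives you equality of the final exchange matrices and $c$-matrices, but the quantum $y$-variables expressed in the initial generators also carry the quantum $F$-polynomial part (the separation formula), and the implication ``same $\widetilde{B}(T)$ $\Rightarrow$ same $Y_i(T)$'' is a synchronicity statement whose only known proofs go through the categorical interpretation (silting/torsion pairs for the Ginzburg algebra) or equivalent machinery. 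As written, you are assuming a statement essentially equivalent in depth to the one you are proving.

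Second, the concluding ``upgrade'' from equality of adjoint actions to equality of elements is false as stated. The adjoint representation of the group of power series with constant term $1$ on $\Ah_Q$ is faithful only when the form $\vev{~,~}$ is non-degenerate: $y^{\alpha}$ is central exactly when $\alpha\in\ker B$, and $\ker B\neq 0$ for every quiver with an odd number of vertices (and many others), so in general $\mathrm{Ad}(F)=\mathrm{Ad}(G)$ only forces $FG^{-1}$ to be a central power series, not $1$. The standard repair is to let everything act by conjugation on the quantum torus of the framed quiver $Q^{\wedge}$, whose $2n\times 2n$ exchange matrix $\left(\begin{smallmatrix} B & I \\ -I & 0\end{smallmatrix}\right)$ is unimodular, so that faithfulness does hold; your argument never leaves $\Ah_Q$. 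The first step of your plan (that $\mathrm{Ad}(\EE(y^{\alpha_t};q^{\varepsilon_t}))$ implements one quantum $y$-mutation, with sign coherence governing the $q\leftrightarrow q^{-1}$ bookkeeping) is correct and standard, but by itself it does not close either gap.
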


\begin{theorem}[Keller\cite{Keller2012}]
 \label{thm:EEm-reddening}
 If $\seq{m}$ and $\seq{m}'$ are reddening
 sequences on the quiver $Q$, then there is a frozen isomorphism between
 the final ice quivers $\mu_{\seq{m}}(Q^{\wedge})\simeq
 \mu_{\seq{m}'}(Q^{\wedge})$.
\end{theorem}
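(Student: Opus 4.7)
The plan is to deduce the theorem from the following key lemma:

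\smallskip
\noindent\textbf{Key Lemma.} \emph{For any reddening sequence $\seq{m}$ of $Q$, the final $c$-matrix has the form $C(T) = -P_\sigma$ for some permutation $\sigma \in S_n$, where $(P_\sigma)_{ij} = \delta_{j,\sigma(i)}$.}

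\smallskip
Granting this lemma, I would first read off the structure of $\mu_{\seq{m}}(Q^{\wedge})$. The rows of $C(T) = -P_\sigma$ are $c_i(T) = -e_{\sigma(i)}$, so by Definition \ref{def:c-vector} the only arrow between the mutable vertex $i$ and the frozen vertices is a single arrow $\sigma(i)' \to i$. The principal part, by \eqref{eq:CBC=B}, satisfies
\[
  B(T) = (-P_\sigma)\,B(0)\,(-P_\sigma^\top) = P_\sigma B(0) P_\sigma^\top,
  \qquad \text{i.e.,} \qquad B(T)_{ij} = B(0)_{\sigma(i),\sigma(j)}.
\]

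Given a second reddening sequence $\seq{m}'$ with associated permutation $\sigma'$, I would define the candidate frozen isomorphism $\phi : \mu_{\seq{m}}(Q^{\wedge}) \to \mu_{\seq{m}'}(Q^{\wedge})$ by setting $\phi(j') = j'$ on frozen vertices and $\phi(i) = \tau(i)$ on mutable vertices, where $\tau := (\sigma')^{-1}\sigma$. The verification that $\phi$ is a quiver isomorphism is then routine: the arrow $\sigma(i)' \to i$ in the first ice quiver is sent to $\sigma(i)' \to \tau(i)$, which matches the arrow $\sigma'(\tau(i))' = \sigma(i)' \to \tau(i)$ in the second; and the multiplicity $B(T)_{ij} = B(0)_{\sigma(i),\sigma(j)} = B(0)_{\sigma'(\tau(i)),\sigma'(\tau(j))} = B'(T')_{\tau(i),\tau(j)}$ of arrows between mutable vertices is preserved.

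The main obstacle — essentially the entire content of the theorem — is the Key Lemma. By sign-coherence every row of $C(T)$ lies in $\pm \mathbb{N}^n$, and by the reddening hypothesis every row in fact lies in $-\mathbb{N}^n$; but sign-coherence together with the unimodularity of $C(T)$ does not suffice to force a nonpositive integer basis to be of the form $\{-e_{\sigma(i)}\}$ (for instance, $\bigl(\begin{smallmatrix} -1 & -1 \\ 0 & -1 \end{smallmatrix}\bigr)$ is unimodular and nonpositive but is not the negative of a permutation matrix). The required extra input must therefore come from deeper cluster-algebraic or categorical structure. My preferred route is via the tropical duality of Nakanishi and Zelevinsky, which asserts an identity $G(T)^\top C(T) = I_n$, where $G(T)$ is the matrix of $g$-vectors of the cluster variables at time $T$; combined with the sign-coherence of $g$-vectors, the reddening condition forces both the $g$-vectors and the $c$-vectors to be negatives of standard basis vectors, up to a common permutation. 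An equivalent approach is categorical: using Amiot's generalized cluster category or the Jacobi algebra of a quiver with potential (Derksen--Weyman--Zelevinsky, Plamondon), one identifies $c$-vectors with dimension vectors of certain simple modules, and the reddening condition characterizes the final cluster-tilting object as a shift of the initial one, making $\sigma$ manifest. Either way, Steps 2--3 above are a short combinatorial postscript to this substantial cluster-categorical input.
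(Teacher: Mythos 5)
The paper does not actually prove Theorem \ref{thm:EEm-reddening}: it is imported verbatim from Keller \cite{Keller2012}, and the authors explicitly remark that although the statement is combinatorial, ``the known proofs are based on categorification in terms of Ginzburg dg-algebra \cite{Ginzburg2006}.'' So there is no in-paper argument to compare yours against; the question is whether your proposal stands on its own.

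Your reduction is the right one, and the easy half is done correctly. Granting the Key Lemma ($C(T)=-P_\sigma$), reading off the frozen part from Definition \ref{def:c-vector}, getting the principal part from \eqref{eq:CBC=B}, and checking that $\tau=(\sigma')^{-1}\sigma$ (identity on frozen vertices) matches both the arrows $\sigma(i)'\to i$ and the multiplicities $B(T)_{ij}=B(0)_{\sigma(i),\sigma(j)}$ is a complete and correct derivation of the frozen isomorphism; since the quivers have no $2$-cycles, preserving $B$ is the same as preserving arrow multiplicities. Your observation that sign-coherence plus unimodularity alone cannot force $C(T)=-P_\sigma$ (the example $\bigl(\begin{smallmatrix}-1&-1\\0&-1\end{smallmatrix}\bigr)$) correctly locates where the real content sits. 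The sketch of how the Key Lemma would follow is also sound in outline: if $C\le 0$ is unimodular and the columns of $C^{-1}$ are sign-coherent, then writing $A=-C$ and using $Av=e_j$ one sees each such column must be nonnegative, so $A$ and $A^{-1}$ are both nonnegative integer matrices, hence $A$ is a permutation matrix.

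The gap is that the Key Lemma is the entire theorem, and you do not prove it; you only name inputs from which it would follow. Both of those inputs --- the Nakanishi--Zelevinsky tropical duality and the sign-coherence of $g$-vectors --- are themselves theorems whose proofs (certainly at the time this paper was written) rest on exactly the categorical machinery of Derksen--Weyman--Zelevinsky, Plamondon and Nagao \cite{Derksen2010,Plamondon2011,Nagao2013} that you are trying to treat as a black box; so the categorical input is relocated rather than avoided, which is consistent with the paper's own remark that no purely combinatorial proof is known. There is also a technical wrinkle you gloss over: in the skew-symmetric case the second duality relates the $G$-matrix of $B$ to the $C$-matrix of $B^{\top}=-B$ (the opposite quiver), so before the positivity argument applies you must either track this transposition carefully or use the first duality (inversion of $C$-matrices under exchanging initial and final seeds), again a nontrivial consequence of sign-coherence. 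As it stands, the proposal is an honest and correct reduction plus a plausible plan, not a proof; to make it one you would have to either carry out the tropical-duality argument in detail with correct conventions, or simply cite the Key Lemma from Keller \cite{Keller2012} or Br\"ustle--Dupont--P\'erotin \cite{Bruestle2013}, which is in effect what the paper itself does with the whole theorem.
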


Theorems \ref{thm:EEm=EEm'} and \ref{thm:EEm-reddening} imply that if
$Q$ admits a reddening sequence $\seq{m}$, then the power series
\begin{equation}
  \EE_{Q}:= \EE(Q;\seq{m})\in \Ah_{Q}.
\end{equation}
is independent of the choice of the reddening sequence $\seq{m}$.
Thus it is intrinsically associated with the quiver $Q$.
Keller \cite{Keller2013a} named this invariant as \emph{combinatorial Donaldson-Thomas (DT) invariant}.
Note that the statements of Theorems \ref{thm:EEm-reddening} is
combinatorial, but the known proofs are
based on categorification in terms of Ginzburg dg-algebra
\cite{Ginzburg2006}.
The well-known pentagon identity
\begin{align*}
  \EE(x)\EE(y)= \EE(y)\EE(q^{-\frac{1}{2}}xy)\EE(x)
\end{align*}
for $xy = q yx$ is nothing but the combinatorial
DT invariant of $A_{2}$ quiver $Q=(1{\to}2)$ corresponding to the two
reddening sequences $\seq{m}=(1,2)$ and $\seq{m}'=(2,1,2)$ depicted in
Figure \ref{fig:pentagon}.

Thanks to the identities of quantum dilogarithms, 
Theorem \ref{prop:Zm=Zm'} follows immediately from
our first main result Theorem \ref{thm:main-1}.
\begin{proof}[Proof of Theorem \ref{prop:Zm=Zm'}]
From Theorem \ref{thm:EEm=EEm'}, we have
$\EE(Q;\seq{m})=\EE(Q;\seq{m}')$.
This equality also holds if
we replace $(y^{\alpha_1}, \dots , y^{\alpha_T})$ by $(q^{n_1}
y^{\alpha_1}, \dots ,q^{n_T} y^{\alpha_T})$ for $(n_1 , \dots , n_T) \in
\Z^T$ because they have the same commutation relations.  Therefore,
Theorem \ref{thm:main-1} implies that $Z_{\seq{m}} = Z_{\seq{m}'}$.
\end{proof}

One of the consequences is that, from a viewpoint of Theorem \ref{thm:EEm-reddening},
if a quiver $Q$ admit a reddening sequence $\seq{m}$,
our partition function $Z_{\seq{m}}$ is independent of the choice of the reddening sequence.

\section{Examples}
\label{sec:examples}

In this section, we show some sample computation of partition functions
for some mutation sequences. We hope that these examples illustrate how
our results provide a systematic way of constructing various
$q$-binomial multisum identities.

Throughout this section, we simply write $W(k_t,k_t^{\vee}) $ for
$W^{+}(k_t,k_t^{\vee})$ when we consider green sequences.
\begin{example}
  Consider the $A_2$ quiver
  \begin{equation}
    \label{eq:A2 quiver}
    Q=\vcenter{
      \xymatrix @R=6mm @C=6mm @M=4pt{
        1 \ar[r] & 2 .}
    }
  \end{equation}
  and a mutation sequence $\seq{m} = (1,2)$.
  The partition function of $\seq{m}$ is computed in Example \ref{example:A2-part-poly}:
  \begin{align}
    \label{eq:Zm A2}
    Z_{\seq{m}}(r_1,r_2) = \sum_{k_1,k_2\geq 0} W(k_1,-k_1+r_2)W(k_2,k_1-k_2-r_1) y^{(k_1,k_2)} .
  \end{align}
  The coefficient of $y^{(\beta_1,\beta_2)}$
  in the partition function \eqref{eq:Zm A2} is given by
 \begin{equation}
    \label{eq:coeff-A2-Zm}
     \begin{split}
    [y^{(\beta_1,\beta_2)}] Z_{\seq{m}}(r_1,r_2)
    &=W(\beta_1,-\beta_1+r_2)W(\beta_2,\beta_1-\beta_2-r_1) \\
    &= q^{\frac{1}{2}(\beta_1^2+\beta_2^2-\beta_1\beta_2-\beta_1r_2+\beta_2r_1)}
    \qb{r_2}{\beta_1}_q \qb{\beta_1-r_1}{\beta_2}_q .
     \end{split}
 \end{equation}
  By Theorem \ref{thm:main-1}, this is equal to
  \begin{align*}
    q^{-\frac{1}{2}(\beta_1 r_2 - \beta_2 r_1)}
    [y^{(\beta_1,\beta_2)}]
    \left((\EE(y_1;q) \EE(y_2;q))^{-1} \EE(q^{r_2} y_1;q) \EE(q^{-r_1} y_2;q)\right).
  \end{align*}
\end{example}

\begin{example}
  We again consider the $A_2$ quiver \eqref{eq:A2 quiver}.
  The mutation sequences $\seq{m} = (1,2)$ and $\seq{m}' = (2,1,2)$ of $Q$ are maximal green sequences.
  In addition, the permutation of vertices $(1,2)$ is a frozen isomorphism between
  $\mu_{\seq{m}}(Q^\vee)$ and $\mu_{\seq{m}'}(Q^{\vee})$(see Figure \ref{fig:pentagon}).
  It follows from Theorem \ref{prop:Zm=Zm'} that
  \begin{align} \label{eq:A_2 m=m'}
    Z_{\seq{m}}(r_1,r_2)=Z_{\seq{m}'}(r_1,r_2)
  \end{align}
  for all $r_1,r_2 \in \Z$.
  Let us express \eqref{eq:A_2 m=m'} in terms of mutation weight.

  In the mutation sequences $\seq{m'}$, the $s$-variables change as follows:
  \begin{align*}
    \begin{array}{c|c|c}
      & s_1(t)                   & s_2(t)                      \\ \hline
      t=0 & r_1               & r_2                  \\
      t=1 & r_1               & k_1-r_2+r_1  \\
      t=2 & k_2-r_1+s_2(1)   & k_1-r_2+r_1  \\
      t=3 & k_2-r_1+s_2(1)   & k_3-s_2(2)+s_1(2)  \\
    \end{array}
  \end{align*}
  The $k^{\vee}$-variables are computed to be
  \begin{equation}
    \begin{split}
      k_1^{\vee} &=-s_2(0) -s_2(1)  =-k_1-r_1,  \\
      k_2^{\vee} &=-s_1(1) -s_1(2)  =-k_1-k_2-r_1+r_2, \\
      k_3^{\vee} &=-s_2(2) -s_2(3) =-k_1-k_2-k_3+r_2. \\
    \end{split}
  \end{equation}
  The partition function of $\seq{m}'$ is given by
  \begin{equation}
    \begin{split}
      \label{eq:Zm' A2}
      Z_{\seq{m}'} (r_1,r_2)
      &=\sum_{k_1,k_2,k_3 \geq 0}  W(k_1,-k_1-r_1)W(k_2,-k_1-k_2-r_1+r_2)\\
      &\cdot W(k_3,-k_1-k_2-k_3+r_2) y^{(k_2+k_3,k_1+k_2)} .
    \end{split}
  \end{equation}

  Comparing the coefficients of $y^{(\beta_1,\beta_2)}$ in \eqref{eq:Zm A2} and \eqref{eq:Zm' A2}
  gives a five-term identity of mutation weights:
  \begin{equation}
    \begin{split}
      &W(\beta_1,-\beta_1+r_2)W(\beta_2,\beta_1-\beta_2+r_1) \\
      &=\sum_{\substack{k_1,k_2,k_3 \geq 0\\k_2+k_3=\beta_1 \\ k_1+k_2=\beta_2}}
      W(k_1, -k_1-r_1)W(k_2,-\beta_2-r_1+r_2) W(k_3,-\beta_2-k_3+r_2).
    \end{split}
  \end{equation}
  Multiplying both sides by $q^{-\frac{1}{2}(\beta_1^2+\beta_2^2-\beta_1\beta_2
  -\beta_1r_2+\beta_2r_1)}$, we get a five-term $q$-binomial identity:
  \begin{align}
    \qb{r_2}{\beta_1}_q \qb{\beta_1-r_1}{\beta_2}_q
    =\sum_{\substack{k_1,k_2,k_3 \geq 0\\k_2+k_3=\beta_1 \\ k_1+k_2=\beta_2}}
    q^{k_1 k_3}\qb{-r_1}{k_1}_q \qb{-r_1+r_2-k_1}{k_2}_q \qb{-\beta_2+r_2}{k_3}_q .
  \end{align}

\begin{rem}
This identity is equivalent to the Stanley's
identity (see \cite{Gould1972})
\begin{equation}
 \begin{split}
 \qb{c+a}{a}_{q}
 \qb{d+b}{b}_{q}
 =
  \sum_{k=0}^{\min(a,b)}
  q^{(a-k)(b-k)}
   \qb{c+d+k}{k}_{q}
   \qb{c+a-b}{a-k}_{q} \qb{d+b-a}{b-k}_{q}.
 \end{split}
\end{equation}
\end{rem}
\end{example}

\begin{example}
  Consider an alternating quiver of type $A_3$
  \begin{equation}\label{eq:A3-quiver}
    Q=\vcenter{
      \xymatrix @R=6mm @C=6mm @M=4pt{
        1 \ar[r] & 2  & 3\ar[l]}
    }.
  \end{equation}

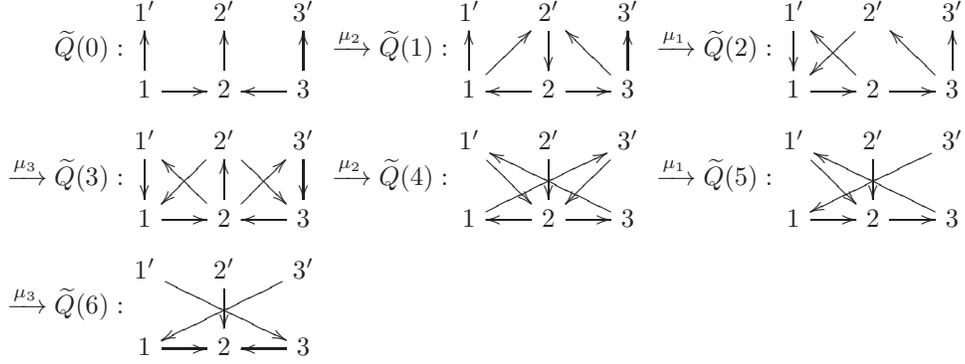
\begin{figure}
  \label{fig:A3 m'}
\begin{align*}
  \widetilde{Q}(0):\hspace{-1mm}
  \vcenter{
    \xymatrix @R=5mm @C=5mm @M=4pt{
      1' & 2'  & 3'  \\
      1 \ar[r] \ar[u] & 2 \ar[u]  & 3 \ar[l] \ar[u] }}
  &\xrightarrow{\mu_2}
  \widetilde{Q}(1):\hspace{-1mm}
  \vcenter{
    \xymatrix @R=5mm @C=5mm @M=4pt{
      1'  & 2' \ar[d] & 3' \\
      1 \ar[u] \ar[ru] & 2 \ar[l] \ar[r]  & 3 \ar[u] \ar[lu]}}
  \xrightarrow{\mu_1}
  \widetilde{Q}(2):\hspace{-1mm}
  \vcenter{
    \xymatrix @R=5mm @C=5mm @M=4pt{
      1' \ar[d] & 2' \ar[ld]  & 3'   \\
      1 \ar[r]  & 2 \ar[lu] \ar[r]  & 3 \ar[lu] \ar[u] }}\\
  \xrightarrow{\mu_3}
  \widetilde{Q}(3):\hspace{-1mm}
  \vcenter{
    \xymatrix @R=5mm @C=5mm @M=4pt{
      1' \ar[d]  & 2' \ar[ld]  \ar[rd] & 3' \ar[d] \\
      1  \ar[r]  & 2 \ar[ru] \ar[lu] \ar[u] & 3 \ar[l]}}
  &\xrightarrow{\mu_2}
  \widetilde{Q}(4):\hspace{-1mm}
  \vcenter{
    \xymatrix @R=5mm @C=5mm @M=4pt{
      1' \ar[rd] & 2' \ar[d] & 3' \ar[ld] \\
      1 \ar[rru] & 2 \ar[l] \ar[r]  & 3 \ar[llu]}}
  \xrightarrow{\mu_1}
  \widetilde{Q}(5):\hspace{-1mm}
  \vcenter{
    \xymatrix @R=5mm @C=5mm @M=4pt{
      1' \ar[rd] & 2' \ar[d]  & 3' \ar[lld]  \\
      1 \ar[r]  & 2 \ar[r]  & 3 \ar[llu]  }}\\
  \xrightarrow{\mu_3}
  \widetilde{Q}(6):\hspace{-1mm}
  \vcenter{
    \xymatrix @R=5mm @C=5mm @M=4pt{
      1' \ar[rrd] & 2' \ar[d]  & 3' \ar[lld]  \\
      1 \ar[r]  & 2   & 3 \ar[l]  }}
  \end{align*}
      \caption{A mutation sequence $\seq{m}' = (2,1,3,2,1,3)$ for an $A_3$ quiver.}
\end{figure}

  The mutation sequences $\seq{m} = (1,3,2)$ and $\seq{m}' = (2,1,3,2,1,3)$ of $Q$ are maximal green sequences.
  In addition, the permutation of vertices $(1,3)$ is a frozen isomorphism between $\mu_{\seq{m}}(Q^\vee)$ and $\mu_{\seq{m}'}(Q^{\vee})$
  (see Figure \ref{fig:A3 m'}).
  It follows from Theorem \ref{prop:Zm=Zm'} that
  \begin{align} \label{eq:A_3 m=m'}
    Z_{\seq{m}}(r_1,r_2,r_3)=Z_{\seq{m}'}(r_1,r_2,r_3).
  \end{align}
  for all $r_1,r_2,r_3 \in \Z$.
  Let us write \eqref{eq:A_3 m=m'} in terms of mutation weights.

  In the mutation sequences $\seq{m}$, the $s$-variables change as follows:
  \begin{align*}
    \begin{array}{c|c|c|c}
      & s_1(t)                   & s_2(t)               & s_3(t) \\ \hline
      t=0 & r_1               & r_2           & r_3 \\
      t=1 & k_1-r_1       & r_2           & r_3 \\
      t=2 & k_1-r_1       & r_2           & k_2-r_3 \\
      t=3 & k_1-r_1       & k_3-r_2   & k_2-r_3
    \end{array}
  \end{align*}
  The $k^{\vee}$-variables are computed to be
  \begin{equation}
    \begin{split}
      k_1^{\vee} &=-s_1(0) - s_1(1) + s_2(0)     =-k_1 +r_2,  \\
      k_2^{\vee} &=-s_3(1) - s_3(2) + s_2(1)      =-k_2 +r_2, \\
      k_3^{\vee} &=-s_2(2) - s_2(3) +s_1(2) +s_3(2)  =k_1+k_2-k_3-r_1-r_3 . \\
    \end{split}
  \end{equation}
  The partition function of $\seq{m}$ is given by
  \begin{equation}
    \begin{split}
      &Z_{\seq{m}}(r_1,r_2,r_3)
      =\sum_{k_1,k_2,k_3 \geq 0}
      W(k_1,-k_1+r_2) \\
      &\cdot W(k_2,-k_2+r_2)W(k_3,-k_3+k_1+k_2-r_1-r_3)y^{(k_1,k_3,k_2)} .
    \end{split}
  \end{equation}

  On the other hand, in the mutation sequences $\seq{m}'$,
  the $s$-variables change as follows:
  \begin{align*}
    \begin{array}{c|c|c|c}
      & s_1(t)                   & s_2(t)                      & s_3(t) \\ \hline
      t=0 & r_1                & r_2                  & r_3 \\
      t=1 & r_1                & k_1-r_2+r_1+r_3  & r_3 \\
      t=2 & k_2-r_1+s_2(1)     & k_1-r_2+r_1+r_3  & r_3 \\
      t=3 & k_2-r_1+s_2(1)     & k_1-r_2+r_1+r_3  & k_3-r_3+s_2(2) \\
      t=4 & k_2-r_1+s_2(1)     & k_4-s_2(3)+s_1(3)+s_3(3) & k_3-r_3+s_2(2) \\
      t=5 & k_5-s_1(4)+s_2(4)  & k_4-s_2(3)+s_1(3)+s_3(3) & k_3-r_3+s_2(2) \\
      t=6 & k_5-s_1(4)+s_2(4)  & k_4-s_2(3)+s_1(3)+s_3(3) & k_6-s_3+s_2(5)
    \end{array}
  \end{align*}
  The $k^{\vee}$-variables are computed to be
  \begin{equation}
    \begin{split}
      k_1^{\vee} &=-s_2(0) -s_2(1)  =-k_1-r_1-r_3 , \\
      k_2^{\vee} &=-s_1(1) -s_1(2)  =-k_1-k_2-r_1+r_2-r_3 ,\\
      k_3^{\vee} &=-s_3(2) -s_3(3)  =-k_1-k_3-r_1+r_2-r_3 ,\\
      k_4^{\vee} &=-s_2(3) -s_2(4) =-2k_1-k_2-k_3-k_4-r_1+2r_2-r_3 ,\\
      k_5^{\vee} &=-s_1(4) -s_1(5) =-k_1-k_2-k_3-k_4-k_5+r_2 ,\\
      k_6^{\vee} &=-s_3(5) -s_3(6) =-k_1-k_2-k_3-k_4-k_6+r_2 .
    \end{split}
  \end{equation}
  The partition function of $\seq{m}'$ is given by
  \begin{equation}
    \begin{split}
      \label{eq:Zm' A3}
      &Z_{\seq{m}'}  (r_1,r_2,r_3)
      =\sum_{k_1,\dots,k_6 \geq 0}  W(k_1,-k_1-r_1-r_3)W(k_2,-k_1-k_2-r_1+r_2-r_3)\\
      &\cdot W(k_3,-k_1-k_3-r_1+r_2-r_3)  W(k_4,-2k_1-k_2-k_3-k_4-r_1+2r_2-r_3)\\
      &\cdot W(k_5,-k_1-k_2-k_3-k_4-k_5+r_2)  W(k_6,-k_1-k_2-k_3-k_4-k_6+r_2) \\
      &\cdot y^{(k_2+k_4+k_6, k_1+k_2+k_3+k_4, k_3+k_4+k_5)} .
    \end{split}
  \end{equation}

  Comparing the coefficients of $y^{(\beta_1,\beta_2,\beta_3)}$ in $Z_{\seq{m}}$ and
  $Z_{\seq{m}'}$ gives a nine-term identity of mutation weights:
  \begin{equation}
    \begin{split}
      &W(\beta_1,-\beta_1+r_2)W(\beta_3,-\beta_3+r_2)W(\beta_2,\beta_1-\beta_2+\beta_3-r_1-r_3) \\
      &=\sum_{k_1,\dots,k_6 \geq 0 , (*)}
      W( k_1, -k_1-r_1-r_3)W(k_2,-k_1-k_2-r_1+r_2-r_3) \\
      &\quad \cdot W(k_3,-k_1-k_3-r_1+r_2-r_3) W( k_4, -k_1-\beta_2-r_1+2r_2-r_3)\\
      &\quad \cdot W(k_5,-k_5-\beta_2+r_2) W(k_6,-k_6-\beta_2+r_2),
    \end{split}
  \end{equation}
  where $(*)$ is the following conditions:
  \begin{align}
  (*) &\Leftrightarrow
  \begin{cases}
    k_2+k_4+k_6=\beta_1 \\
    k_1+k_2+k_3+k_4=\beta_2 \\
    k_3+k_4+k_5=\beta_3
  \end{cases} .
  \end{align}
  This gives the following identity of $q$-binomial coefficients:
  \begin{equation}
    \begin{split}
      \label{eq:A3 qbinom identity}
      &\qb{r_2}{\beta_1}_q \qb{r_2}{\beta_3}_q \qb{\beta_1+\beta_3-r_1-r_3}{\beta_2}_q \\
      &=\sum_{k_1,\dots,k_6 \geq 0 , (*)}
      q^{k_1k_4+k_1k_5+k_1k_6+k_2k_5+k_3k_6}
      \qb{-r_1-r_3}{k_1}_q 
     \\
      &\qquad\cdot\qb{-k_1-r_1+r_2-r_3}{k_2}_q \qb{-k_1-r_1+r_2-r_3}{k_3}_q \\
      &\qquad\cdot\qb{-k_1+k_4-\beta_2-r_1+2r_2-r_3}{k_4}_q \qb{-\beta_2+r_2}{k_5}_q \qb{-\beta_2+r_2}{k_6}_q
    \end{split}
  \end{equation}

  For example, if we set $(\beta_1,\beta_2,\beta_3)=(1,1,2)$ and $(r_1,r_2,r_3)=(0,6,-2)$,
  \eqref{eq:A3 qbinom identity} gives
  \begin{equation*}
    \begin{split}
      \qb{6}{1}_{q} \qb{6}{2}_{q} \qb{5}{1}_{q}
       = &\qb{2}{0}_q \qb{8}{0}_q \qb{8}{0}_q \qb{14}{1}_q \qb{5}{1}_q \qb{5}{0}_q
      +q \qb{2}{0}_q \qb{8}{0}_q \qb{8}{1}_q \qb{13}{0}_q \qb{5}{1}_q \qb{5}{1}_q \\
      &+ q^2 \qb{2}{0}_q \qb{8}{1}_q \qb{8}{0}_q \qb{13}{0}_q \qb{5}{2}_q \qb{5}{0}_q
      + q^3 \qb{2}{1}_q \qb{7}{0}_q \qb{7}{0}_q \qb{12}{0}_q \qb{5}{2}_q \qb{5}{1}_q .
    \end{split}
  \end{equation*}
  In fact, we can check both sides are equal to
  \begin{align*}
    &1 + 3q +7q^2 + 13q^3 + 22q^4 + 32q^5 + 42q^6 + 50q^7 + 55q^8 + 55q^9 \\
    &+50q^{10} + 42q^{11} + 32q^{12} + 22q^{13} + 13q^{14} + 7q^{15}+3q^{16} + q^{17}.
  \end{align*}

\end{example}

\begin{example}
  \label{example:B_2}
  We give a example which has negative sign mutations.
  Let $Q$ be a quiver given by
\begin{align*}
  Q=
  \vcenter{
    \xymatrix @R=5mm @C=5mm @M=4pt{
      & 3 \ar[ld]  &    \\
      5 \ar[r]  & 2 \ar[u] \ar[d]  & 4 \ar[l]  \\
      & 1 \ar[lu] & }}.
\end{align*}
We can check directly that both of two mutation sequences
\begin{align*}
  \seq{m}  &= (1,3,4,2,1,3,5,2), \\
  \seq{m}' &= (2,1,3,5,2,1,3,4,2,1,3,5)
\end{align*}
are reddening sequences.
The signs $\varepsilon=(\varepsilon_1^\vee ,\dots,\varepsilon_8^\vee)$ of $\seq{m}$ and
$\varepsilon'=({\varepsilon'}_1^{\vee},\dots, {\varepsilon'}_{12}^{\vee})$ of $\seq{m}'$ are
\begin{align*}
  \varepsilon&=(+,+,+,+,+,+,+,+), \\
  \varepsilon'&=(+,+,+,+,+,-,-,+,+,+,+,+).
\end{align*}
The mutation sequence $\seq{m}'$ is a example of
a reddening sequence that is not a maximal green sequence.

Furthermore,
\begin{align*}
  \left(
  \begin{array}{ccccc}
    1 & 2 & 3 & 4 & 5 \\
    3 & 2 & 1 & 5 & 4
  \end{array}
  \right)
  \in S_5
\end{align*}
gives a frozen isomorphism between $\mu_{\seq{m}} (Q^\wedge)$ and $\mu_{\seq{m}'} (Q^\wedge )$.
These mutation sequences appear in periodicity of a level $2$ restricted $T$ and $Y$-system of $B_2$ \cite{IIKKN2013}.
We get
\begin{align}
  \label{Zm=Zm' B2}
  Z_{\seq{m}} (r_1, \dots ,r_5) =Z_{\seq{m}'} (r_1, \dots ,r_5)
\end{align}
by Theorem \ref{prop:Zm=Zm'}.
Comparing the coefficients of $y^{(\beta_1,\dots,\beta_5)}$ of both sides of \eqref{Zm=Zm' B2}
gives a twenty terms identity of the mutation weights
\begin{equation}
  \begin{split}
    \sum_{k\geq 0, (*)}
    W^{\varepsilon_1}(k_1,k_1^\vee) \cdots
    W^{\varepsilon_8}(k_8,k_8^\vee)
    =\sum_{k' \geq 0, (*)'}
    W^{\varepsilon'_1}(k'_1,{k'}_1^{\vee}) \cdots
    W^{\varepsilon'_{12}}(k'_{12},{k'}_{12}^{\vee}),
  \end{split}
\end{equation}
where the $k^{\vee}$-variables $k_1^\vee ,\dots,k_8^\vee$ and ${k'}_1^{\vee},\dots, {k'}_{12}^{\vee}$
are computed to be
\begin{equation}
  \begin{split}
    &k_1^\vee=-k_{1} - r_{2} + r_{5}, \\
    &k_2^\vee=-k_{2} - r_{2} + r_{5}, \\
    &k_3^\vee=-k_{3} + r_{2}, \\
    &k_4^\vee=-k_{1} - k_{2} + k_{3} - k_{4} + r_{1} - 2 r_{2} + r_{3} - r_{4} + r_{5}, \\
    &k_5^\vee=-k_{1} - k_{2} + k_{3} - k_{4} - k_{5} + r_{1} - r_{2} + r_{3} - r_{4}, \\
    &k_6^\vee=-k_{1} - k_{2} + k_{3} - k_{4} - k_{6} + r_{1} - r_{2} + r_{3} - r_{4}, \\
    &k_7^\vee=k_{1} + k_{2} + k_{4} - k_{7} - r_{1} + r_{2} - r_{3}, \\
    &k_8^\vee=-k_{1} - k_{2} + k_{3} - 2 k_{4} - k_{5} - k_{6} + k_{7} - k_{8}
    + r_{1} + r_{3} - r_{4} - r_{5},
  \end{split}
\end{equation}
and
\begin{equation}
  \begin{split}
    &{k'}_1^{\vee}=-k'_{1} + r_{1} + r_{3} - r_{4} - r_{5}, \\
    &{k'}_2^{\vee}=k'_{1} - k'_{2} - r_{2} + r_{5}, \\
    &{k'}_3^{\vee}=k'_{1} - k'_{3} - r_{2} + r_{5}, \\
    &{k'}_4^{\vee}=-k'_{1} - k'_{4} + r_{2} - r_{4} - r_{5}, \\
    &{k'}_5^{\vee}=-k'_{1} + k'_{2} + k'_{3} - k'_{4} - k'_{5} - r_{1} + r_{2} - r_{3}, \\
    &{k'}_6^{\vee}=-k'_{1} + k'_{5} - k'_{6} + r_{2} - r_{5}, \\
    &{k'}_7^{\vee}=-k'_{1} + k'_{5} - k'_{7} + r_{2} - r_{5},  \\
    &{k'}_8^{\vee}=-k'_{4} - k'_{5} - k'_{8} - r_{4} + r_{5}, \\
    &{k'}_9^{\vee}=-k'_{1} - k'_{2} - k'_{3} + k'_{5} - k'_{6} - k'_{7} - k'_{8} - k'_{9}
    + r_{1} + r_{2} + r_{3} - r_{4} - r_{5}, \\
    &{k'}_{10}^{\vee}=k'_{1} - k'_{5} + k'_{9} - k'_{10} - r_{2} + r_{5}, \\
    &{k'}_{11}^{\vee}=k'_{1} - k'_{5} + k'_{9} - k'_{11}  - r_{2} + r_{5}, \\
    &{k'}_{12}^{\vee}=-k'_{1}- k'_{4} - k'_{8} - k'_{9}  - k'_{12} + r_{2}.
  \end{split}
\end{equation}
In addition, $(*)$ and $(*)'$ are the following conditions:
\begin{align}
  (*) &\Leftrightarrow
  \begin{cases}
    k_1 + k_4 + k_6=\beta_1 \\
    k_4 + k_5 + k_6 + k_8 = \beta_2 \\
    k_2 + k_4 + k_5=\beta_3\\
    k_3 = \beta_4 \\
    k_7 = \beta_5
  \end{cases} \\
  (*)' &\Leftrightarrow
  \begin{cases}
    k'_{2} + k'_{6} + k'_{8} + k'_{10} = \beta_1 \\
    k'_{1} + k'_{4} + k'_{8} + k'_{9} = \beta_2 \\
    k'_{3} + k'_{7} + k'_{8} + k'_{11} = \beta_3\\
    k'_{8} + k'_{9} + k'_{12} = \beta_4 \\
    k'_{4} + k'_{5} + k'_{8} = \beta_5
  \end{cases}
\end{align}
\end{example}


\begin{thebibliography}{10}


\bibitem{Bruestle2013}
Thomas Br{\"u}stle, Gr{\'e}goire Dupont, and Matthieu P{\'e}rotin.
\newblock On maximal green sequences.
\newblock {\em International Mathematics Research Notices},
  2014(16):4547--4586, 2014.


\bibitem{Derksen2010}
Harm Derksen, Jerzy Weyman, and Andrei Zelevinsky.
\newblock Quivers with potentials and their representations {II}: Applications
  to cluster algebras.
\newblock {\em Journal of the American Mathematical Society}, 23(3):749--790,
  2010.


\bibitem{Fomin2002}
Sergey Fomin and Andrei Zelevinsky.
\newblock Cluster algebras {I}: foundations.
\newblock {\em Journal of the American Mathematical Society}, 15(2):497--529,
  2002.


\bibitem{Fomin2003}
Sergey Fomin and Andrei Zelevinsky.
\newblock $Y$-systems and generalized associahedra.
\newblock {\em Ann. of Math.} (2) 158 (2003), no. 3, 977--1018.


\bibitem{Fomin2007}
Sergey Fomin and Andrei Zelevinsky.
\newblock Cluster algebras {IV}: coefficients.
\newblock {\em Compositio Mathematica}, 143(01):112--164, 2007.


\bibitem{Ginzburg2006}
Victor Ginzburg.
\newblock Calabi-{Y}au algebras.
\newblock {\em arXiv preprint math/0612139}, 2006.


\bibitem{Gould1972}
Henry W. Gould.
\newblock A new symmetrical combinatorial ldentity.
\newblock {\em Journal of Combinatorial Theory, Series A}, 13(2):278--286,
  1972.


\bibitem{IIKKN2013}
Rei Inoue, Osamu Iyama, Bernhard Keller, Atsuo Kuniba, and Tomoki Nakanishi.
\newblock Periodicities of $T$-systems and $Y$-systems, dilogarithm identities,
  and cluster algebras {I}: Type {$B_r$}.
\newblock {\em Publications of the Research Institute for Mathematical
  Sciences}, 49(1):1--42, 2013.


\bibitem{KT1}
Akishi Kato and Yuji Terashima.
\newblock Quiver Mutation Loops and Partition $q$-Series.
\newblock {\em Communications in Mathematical Physics}, 336(2):811--830, 2015.


\bibitem{KT2}
Akishi Kato and Yuji Terashima.
\newblock Quantum dilogarithms and partition $q$-series.
\newblock {\em Communications in Mathematical Physics}, 338(1):457--481, 2015.


\bibitem{Keller2011}
Bernhard Keller.
\newblock On cluster theory and quantum dilogarithm identities.
\newblock In {\em Representations of Algebras and Related Topics, Editors A.
  Skowronski and K. Yamagata, EMS Series of Congress Reports, European
  Mathematical Society}, pages 85--11, 2011.


\bibitem{Keller2012}
Bernhard Keller.
\newblock Cluster algebras and derived categories.
\newblock In {\em Derived categories in algebraic geometry}, EMS Ser. Congr.
  Rep., pages 123--183. Eur. Math. Soc., Z\"urich, 2012.


\bibitem{Keller2013a}
Bernhard Keller.
\newblock Quiver mutation and combinatorial {DT}-invariants.
\newblock {\em FPSAC '13 The 25th International Conference on Formal Power
  Series and Algebraic Combinatorics. Paris, France, June 24--28, 2013}, 2013.


\bibitem{Keller2013b}
Bernhard Keller.
\newblock The periodicity conjecture for pairs of {Dynkin} diagrams.
\newblock {\em Ann. of Math.} (2) 177 (2013), no. 1, 111--170.


\bibitem{KS2008}
Maxim Kontsevich and Yan Soibelman.
\newblock Stability structures, motivic {Donaldson}-{Thomas} invariants and cluster transformations.
\newblock arXiv preprint arXiv:0811.2435


\bibitem{KS2010}
Maxim Kontsevich and Yan Soibelman.
\newblock Cohomological {Hall} algebra, exponential {Hodge} structures and motivic {Donaldson}-{Thomas} invariants.
\newblock {\em Commun. Number Theory Phys.} 5 (2011), no. 2, 231--352.


\bibitem{Nagao2011}
Kentaro Nagao.
\newblock Quantum dilogarithm identities.
\newblock In {\em Infinite analysis 2010---{D}evelopments in quantum integrable
  systems}, RIMS K\^oky\^uroku Bessatsu, B28, pages 165--170. Res. Inst. Math.
  Sci. (RIMS), Kyoto, 2011.


\bibitem{Nagao2013}
Kentaro Nagao.
\newblock {Donaldson}-{Thomas} theory and cluster algebras.
\newblock {\em Duke Mathematical Journal}, 162(7):1313--1367, 2013.


\bibitem{Nakanishi2011a}
Tomoki Nakanishi.
\newblock Dilogarithm identities for conformal field theories and cluster algebras: simply laced case.
\newblock {\em Nagoya Math. J.} 202 (2011), 23--43.


\bibitem{Nakanishi2011b}
Tomoki Nakanishi.
\newblock Periodicities in cluster algebras and dilogarithm identities.
\newblock Representations of algebras and related topics, 407--443, EMS Ser. Congr. Rep., Eur. Math. Soc., Z\"{u}rich, 2011.


\bibitem{Plamondon2011}
Pierre-Guy Plamondon.
\newblock Cluster characters for cluster categories with infinite-dimensional
  morphism spaces.
\newblock {\em Advances in Mathematics}, 227(1):1--39, 2011.


\bibitem{Reineke2010}
Markus Reineke.
\newblock Poisson automorphisms and quiver moduli.
\newblock {\em Journal of the Institute of Mathematics of Jussieu},
  9(03):653--667, 2010.


\bibitem{Reineke2011}
Markus Reineke.
\newblock Cohomology of quiver moduli, functional equations, and integrality of
  {Donaldson}-{Thomas} type invariants.
\newblock {\em Compositio Mathematica}, 147(03):943--964, 2011.


\end{thebibliography}
\end{document}